\newclass{\threesat}{3SAT}
\newcommand{\TAR}{\operatorname{TAR}}
\newcommand{\GPG}{\operatorname{GP}}
\newcommand{\TARC}[1][\alpha]{\operatorname{TAR-#1}}
\newcommand{\MS}{\operatorname{MS}}
\newtheorem{theorem}{Theorem}
\newtheorem{definition}[theorem]{Definition}
\newtheorem{lemma}[theorem]{Lemma}
\newtheorem{proposition}[theorem]{Proposition}
\def\inst#1{$^{#1}$}
\begin{document}

\title{Graphs with large total angular resolution}

\author{Oswin Aichholzer\inst{1}
\and Matias Korman\inst{2}
\and Yoshio Okamoto\inst{3}
\and Irene Parada\inst{4} 
\and  Daniel Perz\inst{1}
\and Andr{\'{e}} van Renssen\inst{5}
\and Birgit Vogtenhuber\inst{1} 
}

\date{}

\maketitle

\begin{center}
\inst{1}Graz University of Technology, Graz, Austria \\
\{oaich,daperz,bvogt\}@ist.tugraz.at
\\\ \\
\inst{2}Siemens EDA (previously Mentor Graphics), Wilsonville, USA \\
matias.korman@tufts.edu
\\\ \\
\inst{3}The University of Electro-Communications, Tokyo, Japan \\
okamotoy@uec.ac.jp
\\\ \\
\inst{4}Technical University of Denmark, Lyngby, Denmark \\
irmde@dtu.dk
\\\ \\
\inst{5}The University of Sydney, Sydney, Australia \\
andre.vanrenssen@sydney.edu.au
\end{center}

\begin{abstract}
The total angular resolution of a straight-line drawing is the minimum angle between two edges of the drawing.
It combines two properties contributing to the readability of a drawing: 
the angular resolution, which is the minimum angle between incident edges, 
and the crossing resolution, which is the minimum angle between crossing edges.
We consider the total angular resolution of a graph, 
which is the maximum total angular resolution of a straight-line drawing of this graph. 

We prove tight bounds for the number of edges for graphs for some values of the total angular resolution up to a finite number of well specified exceptions of constant size.
In addition, we show that deciding whether a graph has total angular resolution at least $60^{\circ}$ is \NP-hard.
Further we present some special graphs and their total angular resolution.
\end{abstract}

\section{Introduction}

\noindent We study angles between incident edges of straight-line drawings of graphs. In the following we mostly omit the word straight-line.
The \emph{total angular} resolution of a drawing $D$, or short $\TAR(D)$, is the smallest angle occurring in $D$, either between two edges incident to the same vertex or between two crossing edges.
In other words, $\TAR(D)$ is the minimum of the angular resolution $\operatorname{AR}(D)$ and the crossing resolution $\operatorname{CR}(D)$ of the same drawing (where $\operatorname{CR}(D)=360^{\circ}$ if $D$ is plane).
Furthermore, the total angular resolution 
of a graph $G$ (or short $\TAR(G)$) is defined as the maximum of $\TAR(D)$ 
over all drawings $D$ of $G$.
Similarly, the angular resolution and the crossing resolution of $G$ are the maximum of $\operatorname{AR}(D)$ and $\operatorname{CR}(D)$, respectively, over all drawings $D$ of $G$.
Note that the total angular resolution of a graph can be smaller than the minimum of its crossing resolution and its angular resolution, see Figure~\ref{fig:TAR:ar_cr_tar}. 

\begin{figure}[hbt]
	\centering 
	\subfloat[Drawing with \\ $\operatorname{AR}(D)=67.5^{\circ}$.\label{fig:TAR:AR_TAR}]{
		\includegraphics{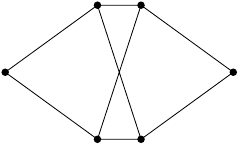}
	}
	\hspace{6mm}
	\subfloat[Drawing with $\operatorname{CR}(D)=90^{\circ}$.\label{fig:TAR:CR_TAR}]{
		\includegraphics{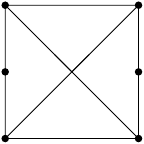}
	}
	\hspace{6mm}
	\subfloat[Drawing with $\TAR(D)=60^{\circ}$.\label{fig:TAR:TAR_TAR}]{
		\includegraphics{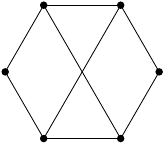}
	}
\caption{Three slightly different drawings $D$ of a graph $G$ with $\TAR(G)=60^{\circ}$.}
\label{fig:TAR:ar_cr_tar}
\end{figure}

In 1993 Formann et al.~\cite{Formann1993DrawingGI} were the first to introduce the angular resolution of graphs. They showed that finding a drawing of a graph with angular resolution at least $90^{\circ}$ is \NP-hard for graphs with maximum degree $4$.
Four yeary later, Malitz and Papakostas~\cite{MP94_ARPG} studied the angular resolution of planar graphs.
For more results about the angular resolution see for example \cite{DK03_PCDPGGAR,duncan2013drawing,KKLMRV21_SRG}.
Another eleven years later, experiments by Huang et al.~\cite{Huang2007UsingET,Huang2008EffectsOC} showed that the crossing resolution plays a major role in the readability of drawings.
Consequently, research in that direction was intensified.
In particular, right angle crossing drawings (or short RAC drawings) were introduced by Didimo, Eades and Liotta~\cite{didimo2011drawing}.
Van Kreveld \cite{van2010quality} showed among other results for RAC drawings, that the angular resolution in RAC drawings can be by an arbitrary factor larger than the angular resolution in plane drawings.
The \NP-hardness of deciding whether a given drawing admits an RAC drawing was proven by Argyriou, Bekos and Symvonis~\cite{Argyriou2011TheSR}.
For a recent survey by Didimo on RAC drawings, see~\cite{Didimo2020racdg}.

For $\alpha$AC drawings (drawings with crossing resolution $\alpha$), Dujmovi\'{c} et al.~\cite{Dujmovic2010NotesOL} showed an upper bound on the number of edges of $\frac{180^{\circ}}{\alpha}(3n-6)$.
For the two special classes of RAC drawings and $\alpha$AC drawings with $60^{\circ}<\alpha<83^{\circ}$ better upper bounds are known~\cite{Ackerman2007OnTM}.
More precisely, Didimo et al. showed that RAC drawings have at most $4n-10$ edges~\cite{didimo2011drawing} and that this bound is tight.
 For $\alpha$AC drawings with ${60^{\circ}<\alpha<83^{\circ}}$, Ackerman and Tardos proved an upper bound of at most ${6.5n-20}$~edges~\cite{Ackerman2007OnTM}. This bound is due to the fact that quasi-planar drawings (drawings without three pairwise crossing edges) have at most ${6.5n-20}$~edges and drawings that are not quasi-planar have a crossing angle of at most~${60^{\circ}}$.

Argyriou, Bekos and Symvonis~\cite{Argyriou2013MaximizingTT} were the first to study the total angular resolution, calling it just \emph{total resolution}. They presented drawings of complete graphs and complete bipartite graphs with asymptotically optimal total angular resolution.
Recently, Bekos et al.~\cite{Bekos2018AHA} presented a new heuristic for finding a drawing of a given graph with high total angular resolution which performed superior to earlier heuristics like~\cite{Argyriou2013MaximizingTT,huang2013improving} on the considered test cases.
For a recent survey on angular resolution, crossing resolution and total angular resolution see Okamoto~\cite{Okamoto2020}.
\newline

\subsection*{Outline}

\noindent In this work we show that almost all graphs with $\TAR(G)>60^{\circ}$ have at most $2n-6$ edges, list the finitely many such graphs that have more than $2n-6$ edges and show that this bound is tight. 
Moreover, we show the following tight
upper bounds on the number of edges for graphs with larger total angular resolution:
$2n-2\sqrt{n}$ for 
${\TAR(G)\geq 90^{\circ}}$, 
$\frac{3}{2}n-\frac{5}{2}$ for 
${\TAR(G)>90^{\circ}}$ (and $n\geq 3$), 
and $n$ for 
$\TAR(G)>120^{\circ}$ (and $n\geq 7$). 
We also prove that it is \NP-hard to determine whether $\TAR(G)\geq 60^{\circ}$. 

Further, we present an infinite family of graphs with $\TAR(G)=60^{\circ}$, for which every proper subgraph $G'$ of $G$ has $\TAR(G')>60^{\circ}$.
We conclude this work with a bound for the number of edges that can be removed from the complete graph $K_n$ without changing its total angular resolution.

\section{Upper bounds on the number of edges}
\label{sec:upperbounds}

\noindent In this section we study the relation between the total angular resolution and the maximal number of edges.
First we need some definitions.
Every straight-line drawing $D$ (of a graph $G$) partitions the plane into connected regions which are called \emph{cells} of $D$.
The \emph{planarization} $P(D)$ of a drawing $D$ is the drawing in which we replace every crossing by a vertex so that this new vertex splits both crossing edges into two edges. 
Furthermore, every edge in $P(D)$ has two \emph{sides} and every side is incident to exactly one cell of $P(D)$.
Note that both sides of an edge can be incident to the same cell.
A connected drawing $D$ is a drawing such that $P(D)$ corresponds to a drawing of a plane connected graph.
We define the \emph{size} of a cell of a connected drawing $D$ as the number of sides in $P(D)$ incident to this cell.

\subsection{Graphs with $\TAR(G)>60^{\circ}$}

\noindent In this section we show that for almost all graphs with $\TAR(G)>60^{\circ}$ the number of edges is bounded by $2n-6$. We start by showing a bound for the number of edges in a connected drawing $D$ depending on the size of the unbounded cell of~$D$.

\begin{lemma}\label{lemma:main}
Let $D$ be a connected drawing with $n\geq 1$ vertices, $m$ edges and $\TAR(D)>60^{\circ}$. If the unbounded cell of $D$ has size $k$,
then $m \leq 2 n -2 -\left\lceil {k}/{2} \right\rceil$.
\end{lemma}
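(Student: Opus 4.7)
The plan is to apply Euler's formula to the planarized drawing $P(D)$. Let $c$ denote the number of crossings of $D$; then $P(D)$ is a connected planar graph with $n+c$ vertices and $m+2c$ edges, and Euler's formula gives $F = 2+m+c-n$ faces.

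The heart of the argument is the claim that every bounded cell of $D$ has size at least $4$. To prove it, I first observe that the hypothesis $\operatorname{TAR}(D)>60^{\circ}$ forces every angle between consecutive edges at every vertex of $P(D)$ to strictly exceed $60^{\circ}$: at original vertices of $D$ this is immediate; at a crossing the four angles form two pairs $\alpha,\beta,\alpha,\beta$ with $\alpha+\beta=180^{\circ}$, and $\operatorname{TAR}(D)>60^{\circ}$ forces both $\alpha,\beta\in(60^{\circ},120^{\circ})$; at any degree-$1$ vertex the single angle of $360^{\circ}$ is harmlessly larger than $60^{\circ}$. Since $D$ is connected, every face of $P(D)$ is simply connected, so a standard turning-angle argument shows that the interior corner angles along the closed boundary walk of a bounded face of size $s$ sum to $(s-2)\cdot 180^{\circ}$. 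Combined with the lower bound on each of the $s$ corner angles this yields $(s-2)\cdot 180^{\circ} > 60^{\circ}\cdot s$, i.e., $s \geq 4$.

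The finish is a double count. Using $\sum_f s_f = 2(m+2c)$ and separating the unbounded cell (of size $k$) from the remaining $F-1$ bounded cells (each of size at least $4$), I obtain
\[
2(m+2c) \ \geq\ k + 4(F-1) \ =\ k + 4(1+m+c-n).
\]
Simplifying yields $2m \leq 4n-4-k$, i.e., $m \leq 2n-2-k/2$; since $m$ is an integer, the claimed bound $m \leq 2n-2-\lceil k/2\rceil$ follows.

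The step I expect to require the most care is the angle-sum formula $(s-2)\cdot 180^{\circ}$ for bounded cells whose boundary walk is \emph{not} a simple cycle, which happens when $P(D)$ contains cut vertices or bridges so that the walk repeats a vertex or traverses an edge twice. The resolution is that in a connected planar graph every face is a topological disk, so the turning-number argument applies verbatim to the closed boundary walk; each traversal contributes its own corner angle (with the appropriate multiplicity), and the lower bound $>60^{\circ}$ on every corner angle still forces $s\geq 4$ in exactly the same way. Small corner cases such as $n=1$ (so $m=0$ and $k=0$) or a single edge ($n=2$, $m=1$, $k=2$) fit the formula and can be checked directly as a sanity check.
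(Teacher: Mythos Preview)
Your proof is correct and follows essentially the same route as the paper: planarize, use Euler, bound bounded cells from below by size~$4$, and double-count edge sides. The only cosmetic differences are that the paper dispatches the ``size $\ge 4$'' step in one line (``otherwise $P(D)$ contains a triangle, which forces an angle $\le 60^{\circ}$'') whereas you justify it via the turning-angle identity $\sum\theta_i=(s-2)\cdot 180^{\circ}$, and the paper explicitly notes up front that no three edges cross in a point (which you use implicitly when you write that a crossing has exactly four angles).
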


\begin{proof}
If at least three edges cross each other in a single point, then there exists an angle with at most $60^{\circ}$ at this crossing point.
Therefore, every crossing of the drawing $D$ is incident to exactly two edges.
We planarize $D$ and get 
$n_P= n + \operatorname{cr}(D)$ and $m_P= m + 2 \operatorname{cr}(D)$
where $\operatorname{cr}(D)$ is the number of crossings in $D$, $n_P$ is the number of vertices of $P(D)$, and $m_p$ is the number of edges of~$P(D)$.
Since $P(D)$ is a plane drawing, we can use Euler's formula to compute the number $f_p$ of faces in $P(D)$ as
\begin{equation}
f_P = -n + m + \operatorname{cr}(D) + 2. \label{euler_face}
\end{equation}
Moreover, every bounded cell of $D$ has size at least $4$, as otherwise $P(D)$ would contain a triangle, implying an angle of at most $60^{\circ}$.
By definition, the unbounded cell of $D$ has size $k$ and we obtain the inequality
\begin{equation}
2 m_P \geq 4(f_P - 1) + k.  \label{lemma:counting}
\end{equation}
	Combining Equation~(\ref{euler_face}) and Inequality~(\ref{lemma:counting}) with $m_P= m + 2 \operatorname{cr}(D)$ gives  \newline 
	${m \leq 2 n -2 -\left\lceil {k}/{2} \right\rceil}$.
\end{proof}

From Lemma~\ref{lemma:main} it follows directly that a connected drawing $D$ on $n \geq 3$ vertices and with $\TAR(D)>60^{\circ}$  fulfills $m\leq 2n-4$.
Note that this bound is only two edges away from the optimal upper bound. 

We proceed to prove the bound of $2n-6$ for disconnected drawings.
\begin{lemma}\label{lemma:disconnected}
Let $D$ be a disconnected drawing on $n \geq 3$ vertices with \linebreak $\TAR(D) > 60^{\circ}$.
Then, $m\leq 2n-6$ or $D$ consists of three vertices and one edge (Exception E0 in Figure~\ref{fig:exceptionsAll}).
\end{lemma}

\begin{proof}
Assume that $D$ consists of $\ell \geq 2$ components $C_i$, $1\leq i \leq \ell$, where $C_i$ has $n_i \geq 1$ vertices and $m_i \geq 0$ edges.
Furthermore, $\TAR(C_i)\geq \TAR(D)> 60^{\circ}$ holds.
By Lemma~\ref{lemma:main} we get $m_i\leq 2n_i-2$ for every component.
If $\ell\geq 3$, this directly implies 

\begin{equation*}
m=\sum_{i=1}^{\ell} m_i \leq \sum_{i=1}^{\ell} (2 n_i -2) = 2n - 2 \ell \leq 2n-6.
\end{equation*}
Now consider $\ell=2$.
If $C_1$ contains at least 2 edges, then the size of the unbounded cell of $C_1$ is at least $3$.
So we get $m_1\leq 2n_1 - 4$ by Lemma~\ref{lemma:main}.
This gives

\begin{equation*}
m=m_1+m_2 \leq 2n_1 - 4 + 2n_2-2=2n - 6.
\end{equation*}
This implies that we only have to check drawings with exactly two connected components and each has at most one edge.
If both $C_1$ and $C_2$ consist of two vertices and one edge, then we have
 ${m=2=2\cdot 4 -6}$~edges. 
If $D$ is a drawing on $n=3$ vertices with $m=1$ edges, then we have Exception E0.
\end{proof}

Now we prove the bound for connected drawings.
To further improve the bound from Lemma~\ref{lemma:main}, the following lemma is useful.

\begin{lemma}\label{obs:inner_degree}
Let $D$ be a plane connected drawing where the boundary of the unbounded cell is a simple polygon $P$ 
with $p>3$ vertices. 
Let the inner degree of a vertex~$v_i$ of $P$ be the number~$d'(v_i)$ of edges incident to $v_i$ that lie in the interior of~$P$.
If $\TAR(D)>60^{\circ}$, then $\sum_{v_i \in V(P)} d'(v_i) \leq 2 p - 7$ holds.
\end{lemma}

\begin{proof}
Assume to the contrary that $\TAR(D)>60^{\circ}$ and $\sum_{v_i \in V(P)} d'(v_i) \geq 2 p - 6$.
The sum of all inner angles in any simple polygon $P$ with $p$ vertices is $180^{\circ}(p-2)$.
The number of all inner angles in $D$ incident to vertices in $P$ is $p+ \sum_{v_i \in V(P)} d'(v_i) \geq 3 p-6$.
Since every angle in $D$ is larger than $60^{\circ}$, all inner angles incident to vertices of $P$ sum up to strictly more than $180^{\circ}(p-2)$.
This contradicts that the sum of the inner angles is $180^{\circ}(p-2)$.
Therefore, we have $\sum_{v_i \in V(P)} d'(v_i) \leq 2 p - 7$.
\end{proof}

\begin{lemma}\label{lemma:2n-5}
Let $D$ be a connected plane drawing on $n\geq 3$ vertices, where $D$ is not a path on $3$ vertices and not a $4$-cycle.
If $\TAR(D)>60^{\circ}$, then $m\leq 2n - 5$.
\end{lemma}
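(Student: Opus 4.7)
The plan is to start from Lemma~\ref{lemma:main}, which for a connected plane drawing $D$ with $\operatorname{TAR}(D)>60^{\circ}$ gives $m\leq 2n-2-\lceil k/2\rceil$, where $k$ is the size of the unbounded cell. The desired bound $m\leq 2n-5$ follows immediately as soon as $\lceil k/2\rceil\geq 3$, that is, $k\geq 5$. Since for $n\geq 3$ and $D$ connected the outer cell has size at least $3$, only the two small cases $k=3$ and $k=4$ require additional work.

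For $k=3$ I would first note that the outer boundary walk cannot contain bridges (a bridge contributes an even number of sides to the same cell), so the outer boundary is a simple triangle with interior angles $\alpha_1,\alpha_2,\alpha_3$ summing to $180^{\circ}$. Letting $d'_i$ denote the number of interior edges at the $i$-th triangle vertex, the angle $\alpha_i$ is split into $d'_i+1$ angular pieces, each strictly larger than $60^{\circ}$, giving $\alpha_i>60^{\circ}(d'_i+1)$. Summing over $i$ yields $180^{\circ}>60^{\circ}(3+\sum_i d'_i)$, i.e., $\sum_i d'_i<0$, which contradicts $d'_i\geq 0$. Hence $k=3$ is impossible.

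For $k=4$ I would argue that a length-$4$ outer boundary walk in a connected simple plane drawing is either a simple $4$-cycle or the walk arising from two bridges, i.e., a path on three vertices. The latter is excluded by hypothesis, and the pure $4$-cycle (with no interior structure) is excluded as the $4$-gon. In the remaining subcase the outer boundary is a simple quadrilateral, so Observation~\ref{obs:inner_degree} with $p=4$ gives $\sum_i d'_i\leq 2\cdot 4-7=1$. Connectedness of $D$ combined with a nonempty interior then forces exactly one tendril edge $v_1w$ into the interior. Let $H$ be the straight-line drawing induced on the interior vertices; $H$ is a connected plane drawing (otherwise $D$ would be disconnected), and removing the tendril only merges two consecutive angles at $w$ into a larger one, so $\operatorname{TAR}(H)\geq\operatorname{TAR}(D)>60^{\circ}$. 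Writing $n_H=n-4$ and $m_H=m-5$, the corollary $m_H\leq 2n_H-4$ of Lemma~\ref{lemma:main} (which is trivial when $n_H\leq 2$) yields $m\leq 2n-7\leq 2n-5$.

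The main obstacle I anticipate is the bookkeeping in the $k=4$ case: one must carefully enumerate the length-$4$ outer boundary walks, use Observation~\ref{obs:inner_degree} to limit to at most one edge entering the interior, and then bootstrap the bound $m\leq 2n-4$ on the interior subgraph to the final bound $m\leq 2n-5$ on $D$. By contrast, the $k=3$ case collapses to a single angle-sum inequality, and $k\geq 5$ is handled directly by Lemma~\ref{lemma:main}.
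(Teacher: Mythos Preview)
Your approach is essentially the paper's: split on the size $k$ of the outer cell, dispose of $k\geq 5$ via Lemma~\ref{lemma:main}, rule out $k=3$ by an angle-sum argument, and for $k=4$ peel off the outer $4$-gon together with the single inward edge guaranteed by Observation~\ref{obs:inner_degree}, then apply Lemma~\ref{lemma:main} to the interior.

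There is, however, a genuine numerical slip in the final step. You invoke the corollary $m_H\leq 2n_H-4$ and claim it ``is trivial when $n_H\leq 2$''. It is not: for $n_H=1$ one has $m_H=0$ but $2n_H-4=-2$, and for $n_H=2$ with a single interior edge one has $m_H=1>0=2n_H-4$. Consequently your derived bound $m\leq 2n-7$ is false in these cases; e.g., a $4$-gon with one interior vertex joined by a single edge has $n=5$, $m=5$, but $2n-7=3$. The fix is to use Lemma~\ref{lemma:main} itself rather than its $n\geq 3$ corollary: $m_H\leq 2n_H-2$ holds for every connected $H$ with $n_H\geq 1$, and then $m=m_H+5\leq 2n_H+3=2(n-4)+3=2n-5$. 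This is precisely what the paper does.
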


\begin{proof}
The unbounded cell of $D$ cannot have size $3$, as in this case the convex hull of the drawing is a triangle and we have $\TAR(D)\leq 60^{\circ}$.
If the drawing $D$ has an unbounded cell of size at least $5$ and $\TAR(D)>60^{\circ}$, then $m \leq 2n -5$ follows directly from Lemma~\ref{lemma:main}.
Otherwise, the unbounded cell of $D$ has size~$4$, which, as $D$ is not a path on $3$ vertices, implies that the boundary of $D$ is a 4-cycle $F$. By Lemma~\ref{obs:inner_degree} and the fact that $D$ is not a 4-cycle, $D$ contains exactly one edge $e$ with one vertex in the interior of $F$ and the other vertex on the boundary of $F$.
Let $D'$ be the drawing we obtain by deleting all vertices and edges of $F$ and also the edge $e$. 
The drawing $D'$ is connected and has $n' \geq 1$ vertices and $m'$ edges, where $n'=n-4$ and $m'=m-5$. 
By Lemma~\ref{lemma:main} we know that $m' \leq 2 n' - 2$ and we derive
$m = m'+5 \leq 2 n' -2 + 5 = 2n - 5$.
\end{proof}

Two drawings are \emph{combinatorially equivalent} if all cells are bounded by the same edges, all crossing edge pairs are the same, and for each edge $e$ the order of crossings along $e$ is the same. We extend Lemma~\ref{lemma:2n-5} in the following way.

\begin{lemma}\label{lemma:2n-6}
	Let $D$ be a connected plane drawing on $n \geq 3$ vertices with $m$~edges and $\TAR(D) > 60^{\circ}$. 
	Then, $m\leq 2n-6$ unless $D$ is combinatorially equivalent to one
	of the exceptions E1--E9 listed below 
	and depicted in Figure~\ref{fig:exceptionsAll}.\\[-4ex]
	\begin{enumerate} [label={\emph{\bf E\arabic*}}]
\item 
	A tree on at most 4 vertices.
\item  
	A plane $4$-cycle.
\item 
	A plane $4$-cycle with one additional vertex connected to one vertex of the $4$-cycle. The vertex can be inside or outside the cycle.
\item  
	A plane $5$-cycle.
\item 
	A plane $5$-cycle with one vertex inside connected to two non-neighboring vertices of the $5$-cycle.
\item  
	A plane $5$-cycle with an edge inside, connected with 3 edges to the 5-cycle
  such that the interior of the 5-cycle is partitioned into two 4-faces and one 5-faces.
\item 
	A plane $6$-cycle with an additional diagonal between opposite vertices.
\item 
	A plane $6$-cycle with an additional vertex or edge inside, connected with 3 or 4, respectively, edges to the 6-cycle
  such that the interior of the 6-cycle is partitioned into 3 or 4, respectively, 4-faces.
\item  
	A plane $6$-cycle with either a path on 3 vertices or a 4-cycle inside, connected with 5 edges to the 6-cycle such that the interior of the $6$-cycle is partitioned into 4 or 5, respectively, 4-faces.
\end{enumerate}
\end{lemma}

\begin{figure}[ht]
\centering 
\includegraphics[page=2]{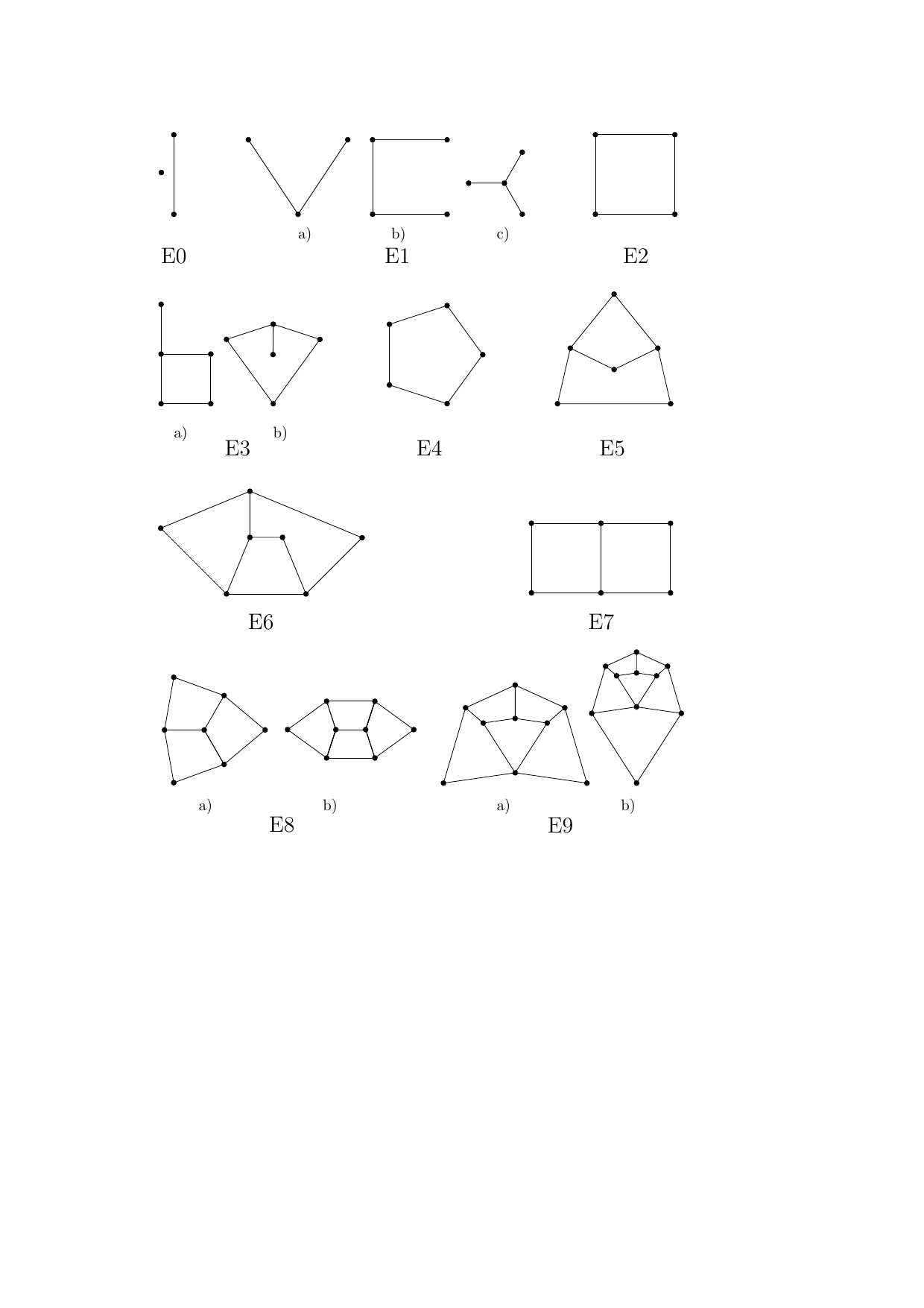}
\caption{All exceptions for Lemma~\ref{lemma:2n-6} and Theorem~\ref{theorem:2n-6}}\label{fig:exceptionsAll}
\end{figure}

\begin{proof}
Let $D'$ be a subdrawing of $D$ consisting of all vertices that are not on the unbounded cell and of all edges that are not incident to a vertex on the unbounded cell. 
Assume $D'$ has $n'$ vertices and $m'$ edges.
We distinguish four cases, depending on the size of the unbounded cell.
\begin{description}
\item[Case 1] The unbounded cell has size $4$.
If the drawing has only one cell, then it is Exception~E1a.
Otherwise, the boundary of the unbounded cell is a 4-cycle $C$ and we have $n=n'+4$. 
As by Lemma~\ref{obs:inner_degree}, there is at most one edge in $D$ from a vertex of $C$ to $D'$, we have $m\leq m'+5$.

If there is at most one vertex inside $C$, then we have Exception E2 or E3b. So assume that there are at least two vertices inside $C$.
Since there is at most one edge from a vertex of $C$ to the interior and $D'$ is connected, $D'$ thus has at least one edge.
So the unbounded cell of $D'$ has size at least~$2$.
 Hence, by Lemma~\ref{lemma:main} for $D'$, it holds that ${m'\leq 2n'-3}$ and we obtain

\begin{equation*}
m\leq m'+5\leq 2n'-3 +5 = 2(n-4)+2 = 2 n - 6.
\end{equation*}

\item[Case 2] The unbounded cell has size $5$.
In this case, the outer boundary must be a $5$-cycle: The only other possibility would be a triangle with an attached edge, but in that case we would have $\TAR(D)\leq 60^{\circ}$. 
Hence, we have $n'=n-5$.
If there are at most two adjacent vertices inside the 5-cycle that are connected with edges to the 5-cycle, then we have one of the Exceptions E4, E5, or E6. So assume that there are at least 3 vertices in the interior. Due to Lemma~\ref{obs:inner_degree}, there are at most three edges connecting the interior to the $5$-cycle and the $5$-cycle itself has $5$ edges, that is, $m\leq m'+5+3=m'+8$.
If $D'$ is connected, then the size of the unbounded cell of $D'$ is at least $3$ and we have $m'\leq 2n'-4$ by Lemma~\ref{lemma:main}.
Otherwise $D'$ consists of two or three connected components.
By Lemma~\ref{lemma:disconnected} we have $m'\leq 2n'-6$ unless $D'$ consists of three vertices and an edge, which gives $m'\leq 2n'-5$, or it contains fewer than three vertices.
The only disconnected drawing with fewer than three vertices is a drawing consisting of two vertices, which gives $m'\leq 2n'-4$.
So, in all cases we get $m'\leq 2n'-4$ and hence have

\begin{equation*}
m \leq m' + 8 \leq 2 n' - 4 +8 = 2 n - 6.
\end{equation*}

\item[Case 3] The unbounded cell of the drawing $D$ has size $6$.
If $D$ has only one cell (i.e. only the unbounded cell), we have Exception E1b or E1c. Otherwise the boundary $B$ of the unbounded cell of $D$ either consists of two triangles sharing a vertex ($\TAR(D)\leq 60^{\circ}$) or is a $4$-cycle with an attached edge or a $6$-cycle. So there are two cases we have to consider.
\begin{itemize}
\item If $B$ is a $4$-cycle with an attached edge, we use similar arguments as in Case 1. If there is no vertex inside the $4$-cycle, then we have Exception E3a. If we have at least one point inside the 4-cycle, then by Lemma~\ref{lemma:main} we have $m'\leq 2 n' - 2$. 
So we get

\begin{equation*}
m = m' + 6 \leq 2 n' - 2 + 6 = 2(n'+5)-6 = 2n - 6.
\end{equation*}

\item If $B$ is a $6$-cycle, then by Lemma~\ref{obs:inner_degree} we can have at most $5$ edges connecting the interior to the 6-cycle.
First we consider the case that $D'$ is connected.
If $\TAR(D)>60^{\circ}$ and $n'\geq 3$, then $\TAR(D')>60^{\circ}$ and $D'$ fulfills ${m' \leq 2n'-5}$ by Lemma~\ref{lemma:2n-5} unless $D'$ is a path on 3 vertices or a 4-cycle. 
Furthermore, we know that $n=n'+6$ and $m\leq m'+11$.
If $m' \leq 2 n' - 5$, then

\begin{equation*}
m \leq m' + 11 \leq 2 n' - 5 +11 = 2n-6.
\end{equation*}

Consider now the case that $n'\leq 2$ or $D'$ is a path on 3 vertices or a 4-cycle. 
These cases can be checked by hand.
Therefore, we have Exceptions E7 and E8 if $n'\leq 2$, and Exceptions E9 if $D'$ is a path on 3 vertices or a 4-cycle. 

If $D'$ is not connected and $\TAR(D)>60^{\circ}$, then 
 either $D'$ fulfills $m'\leq 2n'-6$, or $D'$ consists of three vertices and an edge (by Lemma~\ref{lemma:disconnected}), or $D'$ consists of two vertices.
If $D'$ fulfills $m'\leq 2n'-6$ or consists of three vertices and an edge, then we have $m \leq 2n-6$.
So consider the case that $D'$ consists of two vertices.
This means that one of the two inner vertices has degree at least $3$ in the drawing $D$. If one vertex has degree 4, then there is a triangle in our drawing $D$ which means that ${\TAR(D)\leq 60^{\circ}}$. Otherwise, if one vertex has degree 3 and the other one has degree 2, then we have a drawing like in Figure~\ref{figure:ex6gon_2points}. 
The gray shaded \text{$4$-cycle} has 2 edges in the interior. 
So due to Lemma~\ref{obs:inner_degree} we have ${\TAR(D)\leq 60^{\circ}}$.

\begin{figure}[tb]
\centering 
	\includegraphics[scale=0.9]{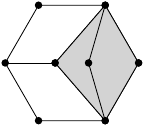}
\caption{Two separated vertices inside a $6$-cycle.}\label{figure:ex6gon_2points}
\end{figure}
\end{itemize}

\item[Case 4] The unbounded cell has size at least $7$.
Then we have, by Lemma~\ref{lemma:main},
\begin{equation*}
m \leq 2 n - 2 - \left\lceil \frac{k}{2} \right\rceil = 2 n - 2 - \left\lceil \frac{7}{2} \right\rceil \leq 2 n - 6.
\end{equation*}
\end{description}
\end{proof}

Note that Lemma~\ref{lemma:2n-6} considers plane drawings. 
Next we consider drawings with at least one crossing, whose planarizations are in the exceptions of Lemma~\ref{lemma:2n-6}.
If $D$ has a crossing, then $P(D)$ has a vertex of degree at least~$4$.
The only exceptions with such a vertex are the ones of E9; see again Figure~\ref{fig:exceptionsAll}.

\begin{lemma}
	If we replace the vertex of degree $4$ in a drawing of E9 in Figure~\ref{fig:exceptionsAll}
	with a crossing, then the resulting drawings $D$ have $\TAR(D)\leq 60^{\circ}$.
\end{lemma}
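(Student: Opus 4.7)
The plan is to argue by contradiction: suppose $D$ is a straight-line drawing realizing the graph obtained from an E9 exception by replacing the unique degree-$4$ inner vertex $v$ by a crossing $X$, and assume for contradiction that $\operatorname{TAR}(D)>60^{\circ}$. Let $a_1,a_2,a_3,a_4$ denote the cyclic neighbors of $v$, paired so that $(a_1,a_3)$ and $(a_2,a_4)$ are the two pairs fused into the through-edges of the crossing. The crossing forces $a_1,X,a_3$ and $a_2,X,a_4$ each to be collinear, and $\operatorname{TAR}(D)>60^{\circ}$ forces each of the four angles $\angle a_iXa_{i+1}$ to lie strictly in $(60^{\circ},120^{\circ})$.

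Second, I would exploit the rigidity of the surrounding structure. Because $\operatorname{TAR}(D)>60^{\circ}$ forbids triangular faces, every cell of $P(D)$ adjacent to $X$ has size at least $4$, and the same holds for the cells incident to the remaining inner vertices (the endpoints of the $P_3$ in the first sub-case of Fig.~\ref{figure:exceptions}, or the other three vertices of the $4$-cycle in the second sub-case). A short angle chase around the boundary $6$-gon, using that its six interior angles sum to $720^{\circ}$ while each strictly exceeds $60^{\circ}$, and that the same lower bound holds at every inner vertex, pins down the direction of each segment $Xa_i$ to a narrow cone around the direction it has in the canonical drawing of Fig.~\ref{figure:exceptions}. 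I would carry out this angle chase separately for the two sub-cases of E9.

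Third, I would verify that in each sub-case these cones are incompatible with the two collinearity constraints $a_1,X,a_3$ collinear and $a_2,X,a_4$ collinear. In the canonical drawing of Fig.~\ref{figure:exceptions}, the four edges incident to the degree-$4$ vertex simply do \emph{not} form two straight lines through the vertex — this is a direct inspection of the figure. Combined with the rigidity estimate from the previous step, any deformation bringing the edges into collinear pairs must push at least one angle at $X$ or at an adjacent vertex down to $60^{\circ}$ or below, contradicting $\operatorname{TAR}(D)>60^{\circ}$.

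The main obstacle will be the rigidity step: since all controlling angle inequalities are strict, one cannot pin positions down exactly, only up to small windows, so the incompatibility argument must show that the \emph{sign} of the discrepancy $\angle a_1Xa_2+\angle a_2Xa_3-180^{\circ}$ (and analogously for the other pair) is constant throughout the admissible window. I expect this to be handled by computing the discrepancy once in the canonical drawing of each sub-case, and then bounding its variation by the maximum slack allowed in the surrounding angles, which is small whenever every angle is required to strictly exceed $60^{\circ}$. A clean case analysis over the two drawings of Fig.~\ref{figure:exceptions} should then close the argument.
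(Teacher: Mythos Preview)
Your proposal has a genuine gap at the ``rigidity'' step. Requiring every angle to exceed $60^{\circ}$ does \emph{not} pin edge directions to narrow cones: the six interior angles of the outer hexagon sum to $720^{\circ}$, so the lower bound leaves $360^{\circ}$ of slack to distribute, and the situation at the inner vertices is similar. More fundamentally, E9 is a \emph{combinatorial} equivalence class, not a single metric picture; inspecting one drawing in Fig.~\ref{figure:exceptions} tells you nothing about an arbitrary realization, and there is no ``canonical drawing'' to perturb from. The variation-bounding scheme in your last paragraph is a hope, not an argument: you would need the discrepancy $\angle a_1Xa_2+\angle a_2Xa_3-180^{\circ}$ to be bounded away from zero uniformly over \emph{all} admissible drawings, and nothing you wrote establishes that.

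What is missing is an \emph{exact} angle identity that leaves no slack. The paper isolates a pentagon $ABP_1P_2P_3$ in the drawing with the crossing $C$ in its interior and with $CA,CB$ edges of the drawing (the segment $AB$ is not an edge). The interior angles of this pentagon sum to $540^{\circ}$ and decompose into eight angles of the drawing together with $\angle BAC+\angle ABC$. In the triangle $ABC$ one has $\angle BAC+\angle ABC=180^{\circ}-\angle ACB$, and since $C$ is a crossing, $\angle ACB+\angle BCD=180^{\circ}$; hence $\angle BCD=\angle BAC+\angle ABC$. Thus nine specific angles of the drawing sum to exactly $540^{\circ}$, so at least one of them is $\le 60^{\circ}$. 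The second E9 drawing follows immediately because the first sits inside it as a subdrawing. This equality is the missing idea; the chain of strict inequalities you planned cannot replace it.
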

\begin{proof}
\begin{figure}[H]
\centering 
\includegraphics{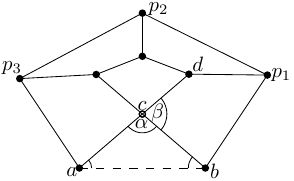}
	\caption{Replacing the vertex of degree $4$ of the drawing of E9a in Figure~\ref{fig:exceptionsAll} with a crossing.}\label{figure:ex_cross_proof}
\end{figure}

If we replace the vertex of degree $4$ of Exception E9a in Figure~\ref{fig:exceptionsAll} with a crossing,
 then we get the drawing $D_{cr}$ in Figure~\ref{figure:ex_cross_proof}, where the dashed edge is not part of the actual drawing. 
We want to show that $\TAR(D_{cr})\leq 60^{\circ}$.
The crossing edge pair forms two angles. As indicated in Figure~\ref{figure:ex_cross_proof}, we denote $\angle acb$ as $\alpha$ and $\angle bcd$ as $\beta$, where $c$ denotes the crossing and $a$, $b$ and $d$ are three of the endpoints of the crossing edges.
Let $p_1,p_2$ and $p_3$ be the other three vertices on the unbounded cell.
Since $c$ is a crossing, $c$ is inside the pentagon $abp_1p_2p_3$.
The inner angles of a pentagon sum up to $540^{\circ}$.
All eight inner angles of the drawing, that are incident to the pentagon $abp_1p_2p_3$, are larger than $60^{\circ}$.
This implies that $\angle bac + \angle abc < 60^{\circ}$.
Furthermore we have $\alpha + \beta = 180^{\circ} = \alpha + \angle bac + \angle abc$.
This means we have $\beta = \angle bac + \angle abc < 60^{\circ}$.
However, $\beta$ appears in $D_{cr}$, and so we have $\TAR(D_{cr})\leq 60^{\circ}$.

Now let $D'_{cr}$ be the drawing we get if we replace the vertex of degree $4$ in the drawing for E9b in Figure~\ref{fig:exceptionsAll} with a crossing.
Then $D_{cr}$ is a subdrawing of~$D'_{cr}$ and hence we get $\TAR(D'_{cr})\leq \TAR(D_{cr})\leq 60^{\circ}$.
\end{proof}

So we have characterized all drawings $D$ which have $\TAR(D)>60^{\circ}$ and ${m>2n-6}$ edges, such that $P(D)$ is in the exceptions of Lemma~\ref{lemma:2n-6}.
This leads us to the following theorem.

\begin{theorem}\label{theorem:2n-6}
Let $G$ be a graph with $n\geq 3$ vertices, $m$ edges and ${\TAR(G)>60^{\circ}}$. 
	Then $m \leq 2n - 6$ unless either there exists a drawing of $G$ that is an exception for Lemma~\ref{lemma:2n-6}
	or
	$G$ consists of exactly three vertices and one edge 
	(Exception E0 in Figure~\ref{fig:exceptionsAll}). 
	Further, if $G$ is a graph that forms an exception for Lemma~\ref{lemma:2n-6}, then every drawing $D$ of $G$ is drawn plane and combinatorially equivalent to an exception of Lemma~\ref{lemma:2n-6}.
\end{theorem}

\begin{proof}
Consider a graph $G$ with $n\geq 3$ vertices, $m>2n-6$ edges and $\TAR(G)>60^{\circ}$ 
Then there exists a drawing $D$ of $G$ with $\TAR(D)>60^{\circ}$ and its planarization~$P(D)$.

If $G$ is disconnected, then by Lemma~\ref{lemma:disconnected} it has either $m\leq 2n-6$ edges or consists of three vertices and one edge.
So for the rest of the proof we only consider connected graphs.

If three edges cross in a single point, then in $P(D)$ this point has degree $6$, and therefore an angle with at most $60^{\circ}$.
Hence every crossing involves exactly two edges and $P(D)$
	has $m_P= m + 2\operatorname{cr}(D)$ edges and $n_P=n+\operatorname{cr}(D)$ vertices.	
By Lemma~\ref{lemma:2n-6} we get that $m_P\leq 2n_P -6$ or $P(D)$ is in the exceptions.
If $m_P\leq 2n_P -6$, then $m=m_P - 2\operatorname{cr}(D) \leq  2(n_P-\operatorname{cr}(D))-6=2n-6$.
If $P(D)$ is in the exceptions, then, as observed before, $D$ is in the exceptions.
\end{proof}

The bound of Theorem~\ref{theorem:2n-6} is the best possible in the sense that there are infinitely many graphs with $m=2 n - 6$ edges and $\TAR(G)>60^{\circ}$.
\begin{proposition}\label{prop:tight}
For every integer $n\geq 17$ there exists a graph $G$ with $n$ vertices and $m=2n-6$ edges such that $\TAR(G)>60^{\circ}$.
\end{proposition}
\begin{proof}

\begin{figure}
\centering
\includegraphics{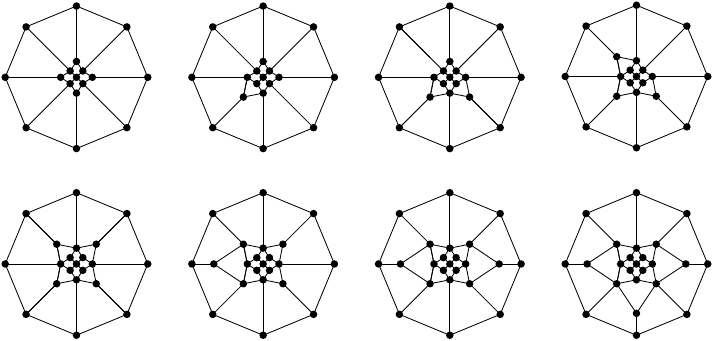}
\caption{Drawings of graphs $G$ with $\TAR(G) > 60^{\circ}$, $n$ vertices and $m=2n-6$ edges for $17\leq n\leq 24$}\label{fig:tightness}
\end{figure}

Figure~\ref{fig:tightness} illustrates drawings $D$ with $17\leq n \leq 24$ vertices, $m=2n-6$ edges and $\TAR(D)>60^{\circ}$.
We extend this family of drawings, such that for any number of vertices $n\geq 17$ we have a drawing with $m=2n-6$ edges and $\TAR(D)>60^{\circ}$, by adding layers of 8-cycles as illustrated in Figure~\ref{fig:tight_extension}.

\begin{figure}
\centering
\includegraphics{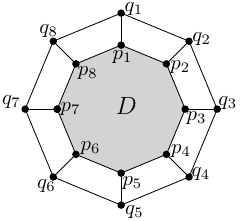}
\caption{Extending the drawings in Figure~\ref{fig:tightness}}\label{fig:tight_extension}
\end{figure}

Let $D$ be a drawing with $n$ vertices, $m=2n-6$ edges, $\TAR(D)>60^{\circ}$ and  whose boundary is a regular $8$-cycle $C=p_1p_2p_3p_4p_5p_6p_7p_8$. 
We construct a bigger drawing $D'$ in the following way.
Let $C'=q_1q_2q_3q_4q_5q_6q_7q_8$ be a regular $8$-cycle, which is concentric to $C$. Further, for any $1\leq i \leq 8$ let $p_i$ and $q_i$ be on a ray from the common circumcenter of $C$ and $C'$.
We merge $D$ with the $8$-cycle~$C'$ and all edges $p_iq_i$ with $1\leq i \leq 8$ and call the resulting drawing $D'$.

First observe that since $\TAR(D)>60^{\circ}$ also $\TAR(D')>60^{\circ}$ holds, because the angles $\angle p_{i-1}q_{i-1}q_{i}=\angle q_{i-1}q_ip_i=67.5^{\circ}$ and $\angle q_{i}p_{i}p_{i-1}=\angle p_{i+1}p_iq_i=112.5^{\circ}$ with $q_0=q_8$ and $p_0=p_8$ for every $1\leq i \leq 8$.
The drawing $D'$ contains ${n'=n+8}$~vertices and $m'=m+16$ edges ($m$ edges of $D$, 8 edges on $C'$ and 8 edges connecting $C$ and $C'$). Since $D$ has $n$ vertices and $m=2n-6$ edges, also $m'=m+16=2n-6+16=2(n+8)-6=2n'-6$.
So by extending a drawing $D$ to $D'$ in this way we get eight more vertices.
Since every drawing depicted in Figure~\ref{fig:tightness} has a regular $8$-cycle as boundary, we are able to extend each of these drawings as described before. Doing this repeatedly, we are able to add $8k$ vertices to each of the drawings for every integer $k>0$.
The numbers of vertices of the drawings depicted in Figure~\ref{fig:tightness} cover all parities modulo eight. So there exists for any number $n\geq 17$ a graph $G$ with $\TAR(G) > 60^{\circ}$ and $m=2n-6$ edges.
\end{proof}

Note that in Proposition~\ref{prop:tight} we constructed plane graphs with $n$~vertices, $2n-6$~edges such that $\TAR(G)>60^{\circ}$.
In the following proposition we show, that there are drawings with $n$~vertices, $2n-6$~edges and $k$~crossings such that $\TAR(G)>60^{\circ}$.

\begin{proposition}
For every integer $k$, there exists a drawing $D$ with $n$~vertices, ${2n-6}$~edges and $k$~crossings such that $\TAR(D)=60^{\circ}$.
\end{proposition}

\begin{proof}
Consider the drawing $D$ in Figure~\ref{fig:tight_crossings}.
It consists of a point $c$, a regular $7$-gon $P={p_1p_2p_3p_4p_5p_6p_7}$ with circumcenter $c$, the center $p_0$ of the line segment $p_1p_7$,
 another regular $7$-gon $Q={q_1q_2q_3q_4q_5q_6q_7}$ with circumcenter $c$, such that $c$, $p_i$ and $q_i$ are on a line for $1\leq i \leq 7$ and the center $q_0$ of the line segment $q_1q_7$.
Further, we have the edge $cp_0$, $cp_2$, $cp_4$ and $cp_6$.
We have another $7$-gon in Figure~\ref{fig:tight_crossings} whose points are on the line segment $p_iq_i$ for $1\leq i \leq 7$ and that crosses the line segment $p_0q_0$.
Observe, that $\TAR(D)>64^{\circ}$
\begin{figure}
\centering
\includegraphics{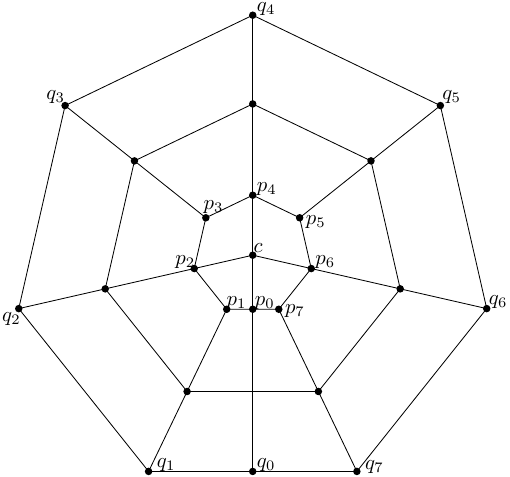}
\caption{Drawing with $24$ vertices $2n-6=42$ edges and one crossing.}\label{fig:tight_crossings}
\end{figure}

Instead of adding just one $7$-gon whose points are on the line segment $p_iq_i$ for $1\leq i \leq 7$, we add $k$ of these $7$-gons to get an extended drawing $D'$.
Each of these crosses the line segment $p_0q_0$ once and they do not induce other crossings.
Further the respective line segments of the $7$-gons are parallel.
Hence, $\TAR(D')=\TAR(D)>64^{\circ}$.

Therefore, $D'$ has $24+7k$ vertices, $42+14k$ edges and $k$ crossing, with $\TAR(D')=\TAR(D)>64^{\circ}$.
\end{proof}

\subsection{Graphs with $\TAR(G)\geq 90^{\circ}$}

\noindent Bodlaender and Tel~\cite{Bodlaender2004ANO} showed that if a graph can be embedded with angular resolution of at least $90^{\circ}$, then the graph can also be embedded such that all angles at vertices have one of the values $90^{\circ}, 180^{\circ}, 270^{\circ}$, and $360^{\circ}$.
Note that in any such drawing, 
the angle between two crossing edges is exactly $90^{\circ}$. Hence, by~\cite{Bodlaender2004ANO}, an angular resolution of at least $90^{\circ}$ for a graph $G$ implies ${\TAR(G)\geq 90^{\circ}}$.
In this section, we show that graphs with $\TAR(G)\geq 90^{\circ}$ have at most ${\lfloor 2n-2\sqrt{n}\rfloor}$~edges, which is tight.

\begin{lemma}\label{lemma:90:lower}
	For every $n \geq 1$, there exists a graph $G$ with $n$ vertices, $\lfloor 2n-2\sqrt{n}\rfloor$ edges, and $\TAR(G) = 90^{\circ}$.
\end{lemma}

\begin{proof}
We will construct the graph $G$ along with a drawing $D$ for $G$ that shows $\TAR(G) = 90^{\circ}$.
If we take a square grid with $k$ vertices on each side, then we have in total $n=k^2$ vertices and $m=2k^2-2k$ edges.
So for a $k\times k$ grid we have $m=\lfloor 2n-2\sqrt{n}\rfloor$, which proves the statement for $n=k^2$.

\begin{figure}[hbt]
	\centering 
	\subfloat[Grid drawing with $m=k^2+r$ edges, $1\leq r \leq k$.\label{fig:TAR90:k2+r1}]{
		\includegraphics{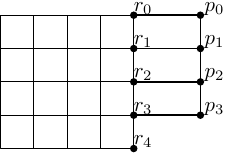}
	}
	\hspace{20mm}
	\subfloat[Grid drawing with $m=k^2+r$ edges, $k+1\leq r< 2k+1$.\label{fig:TAR90:k2+r2}]{
		\includegraphics{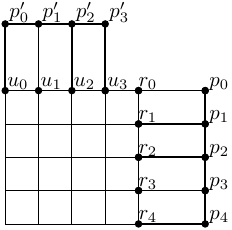}
	}
\caption{Different extensions of a grid drawing}
\label{fig:TAR90:square}
\end{figure}

For $k^2<n<(k+1)^2$ we extend the graph in the following way; see Figure~\ref{fig:TAR90:square}.
We call the rightmost points of the grid $r_0, r_1, \dots, r_{k-1}$ from top to bottom and the topmost points $u_0,\dots, u_{k-1}$ from left to right. (Note that ${r_0=u_{k-1}}$.)

If $n=k^2+r$ with $0< r < k$, we place additional points $p_0,\dots, p_{r-1}$, all on a common vertical line and each $p_i$ is to the right of $r_i$ for $0\leq i \leq r-1$, as depicted in Figure~\ref{fig:TAR90:k2+r1} where the added edges are drawn heavier than the old ones.
Further, we add the edges $r_ip_i$, $0\leq i < r$, and $p_{j-1}p_{j}$, $1\leq j < r$.
This gives us $r+(r-1)=2r -1$ new edges.
So we have ${m=2k^2 -2k +2r-1}$~edges in total.
On the other hand we get

\begin{align*}
\lfloor 2n-2\sqrt{n}\rfloor &= \lfloor 2(k^2 +r)-2\sqrt{k^2+r}\rfloor \\
                            &= 2k^2 +2r + \lfloor - 2\sqrt{k^2+r}\rfloor \\
                            &= 2k^2 - 2k +2r -1= m
 \end{align*}
because $k<\sqrt{k^2 + r} < k + \frac{1}{2}$
if  $1\leq r \leq k$.
So $m=\lfloor 2n-2\sqrt{n}\rfloor$ if $n=k^2+r$ and $1\leq r \leq k$.

For $n=k^2+k+r$, with $1\leq r < k+1$, we add $k$ points as before and also add the same edges.
The remaining $r$ points $p'_0,\dots,p'_r$ are placed on a horizontal line, such that $p'_i$ is above $u_i$ for $0\leq i \leq r-1$.
We further add the edges $p'_iu_i$, $0\leq i \leq r-1$, and $p'_{j-1}p'_j$, $1\leq j\leq r-1$, as depicted in Figure~\ref{fig:TAR90:k2+r2} where the added edges are drawn heavier than the old ones.
So we have $m=2k^2+2r-2$ edges, which is again $m=\lfloor 2n-2\sqrt{n}\rfloor$ for $n=k^2+k+r$ and $1\leq r \leq k$.
This means that for every $n$ there exists a graph $G$ with $\TAR(G)= 90^{\circ}$, $n$~vertices and ${\lfloor 2n - 2\sqrt{n}\rfloor}$~edges.
\end{proof}

\begin{lemma}\label{lemma:90:upper}
Every graph with $\TAR(G)= 90^{\circ}$ has at most ${\lfloor 2n - 2\sqrt{n}\rfloor}$ edges.
\end{lemma}

\begin{proof}
We prove the statement by contradiction.
Let $G$ be a graph with $n$~vertices and $m$~edges.
By Bodlaender and Tel~\cite{Bodlaender2004ANO} we can embed our graph, such that every angle is $90^{\circ},180^{\circ},270^{\circ}$ or $360^{\circ}$. So we can embed our graph on some rectangular grid $R$ with $a\times b$ points, such that in every column and every row there is at least one point and such that the edges are along the grid. We call this drawing $D$.

Now we add edges and vertices to $D$, so that this new drawing $D'$ is the complete $a\times b$ grid.

\begin{figure}[H]
	\centering 
	\subfloat[Example of a drawing $D$ with an underlying grid $R$.\label{fig:TAR90ub:base}]{
		\includegraphics[page=1, scale=1]{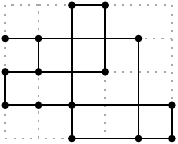}
	}
	\hfill
	\subfloat[Replacing a crossing of $D$ with a vertex.\label{fig:TAR90ub:crossing}]{
		\includegraphics[page=2, scale=1]{figures/TAR_90/extending_to_grid}
	}
	\hfill
	\centering 
	\subfloat[$D'$ after replacing every crossing.\label{fig:TAR90ub:cross_all}]{
		\includegraphics[page=3, scale=1]{figures/TAR_90/extending_to_grid}
	}
	
	\centering
	\subfloat[Adding corners of $R$ to $D'$.\label{fig:TAR90ub:corner}]{
		\includegraphics[page=4]{figures/TAR_90/extending_to_grid}
	}
		\hfill 
	\subfloat[Adding missing edges on the boundary to $D'$.\label{fig:TAR90ub:border}]{
		\includegraphics[page=5]{figures/TAR_90/extending_to_grid}
	}
	
	\centering
	\subfloat[Hitting a vertex $u$ along $\ell$: Add $uv$ to $D'$.\label{fig:TAR90ub:hit_vertex}]{
		\includegraphics[page=6]{figures/TAR_90/extending_to_grid}
	}
	\hfill
	\subfloat[Hitting an edge $xy$ along $\ell$: Add $z$ and $vz$ to $D'$ and split $xy$.\label{fig:TAR90ub:hit_edge}]{
		\includegraphics[page=7]{figures/TAR_90/extending_to_grid}
	}
	\hfill
	\subfloat[Completing $D'$ to a full grid.\label{fig:TAR90ub:full}]{
		\includegraphics[page=8]{figures/TAR_90/extending_to_grid}
	}
\caption{Adding vertices and edges to get from an initial drawing $D$ to a drawing of a grid. The underlying grid $R$ is drawn in gray. The added vertices and edges are marked with red and all split edges are drawn in blue for each step.}
\label{fig:TAR90:adding}
\end{figure}

At the beginning we set $D'=D$, see Figure~\ref{fig:TAR90ub:base}.
First we add a vertex at every crossing of~$D'$ as depicted in Figure~\ref{fig:TAR90ub:crossing} and Figure~\ref{fig:TAR90ub:cross_all}.
By doing this we get four edges instead of two and one new vertex.
So we get two new edges and one new vertex for each crossing.

In the second step, we add corner vertices of $R$ to $D'$. 
If all four corner vertices are already in the drawing, then we skip this step.
Without loss of generality assume that the top left corner vertex of the grid is not a vertex in the drawing.
We call the topmost vertex in the leftmost column $l_t$ and the leftmost vertex in the topmost row $t_l$ as depicted in Figure~\ref{fig:TAR90ub:corner}.
Then we add the top left corner vertex $v_{tl}$ of the grid together with the edges $v_{tl}l_t$ and $v_{tl}t_l$.
Analogously, we add the remaining missing corner vertices.
So for every added corner vertex we also added two edges to $D'$.

Next, we add edges between two points on the boundary of $R$ as illustrated in Figure~\ref{fig:TAR90ub:border}, so that the outer face of $D'$ is a rectangle with possibly some vertices on its sides. Since we already added the corner vertices in the second step, we only add edges in this step.

In the last step, we check if every vertex has full degree (degree 4 for inner vertices, degree 3 for vertices on the boundary, which are not corners, and degree~2 for corner vertices).
Assume we have a vertex $v$ without full degree.
Then there is a line segment $\ell$ of $R$ such that no edge of $v$ is along $\ell$.
We draw a line segment $e$ from $v$ along $\ell$ until we hit a vertex or an edge of $D'$.
If we hit a vertex $u$ as depicted in Figure~\ref{fig:TAR90ub:hit_vertex}, then we add the edge $uv$ to $D'$. In this case, we do not add any vertex.
If we hit an edge $xy$ as depicted in Figure~\ref{fig:TAR90ub:hit_edge}, then we add a vertex $z$, where we hit the edge, to $D'$.
Further we add the edge~$vz$ to $D'$ and split $xy$ into $xz$ and $yz$ in $D'$.
In this case we have one additional vertex and increased the number of edges by two in $D'$.

If every vertex has full degree, then $D'$ is equal to an $a \times b$ grid.
Further, every time we added a vertex, we also increased the number of edges by two.
So this new drawing $D'$ has $n'=n+k$ vertices and $m'$~edges where ${m'\geq m+2k}$.
On the other hand, an $a\times b$ rectangle grid has exactly $n'=ab$ vertices and ${m'=2ab-(a+b)}$~edges.
So we have 

\begin{align*}
m'&=2ab-(a+b)\leq 2ab-2\sqrt{ab}=2(n+k)-2\sqrt{n+k} \text{, and hence} \\
m &\leq m'-2k \leq 2n - 2\sqrt{n+k} \leq 2n -2\sqrt{n}.
\end{align*}
This means that every graph $G$ with $\TAR(G) = 90^{\circ}$ and $n$ vertices has at most ${{\lfloor 2n-2\sqrt{n} \rfloor}}$ edges.
\end{proof}

The example for a graph with $n$ vertices, ${{\lfloor 2n-2\sqrt{n} \rfloor}}$ edges and $\TAR(G) = 90^{\circ}$ is planar.
With some modification of the corresponding drawing we are also able to constuct drawings with $k$ crossing, $n$ vertices, ${{\lfloor 2n-2\sqrt{n} \rfloor}}$ edges and $\TAR(D) = 90^{\circ}$ for any integer $k$.

\begin{proposition}
For every integer $k$ there exists a drawing $D$ with $n$~vertices, ${{\lfloor 2n-2\sqrt{n} \rfloor}}$~edges and $k$~crossings such that $\TAR(D)=90^{\circ}$.
\end{proposition}

\begin{proof}
Let $D$ be a $k+2 \times k+2$ grid where $k$ inner points are replaced by crossings.
This is possible since $P(D)$ contains $k^2\geq k$ inner points.
Obviously, $\TAR(D)=90^{\circ}$.
Then $P(D)$ has $k^2+4k+4$ vertices and $2k^2+6k+4$ edges.
Hence, $D$ has $k^2+3k+4$ vertices and $2k^2+4k+4$ edges.
Due to $k^2+3k+4>\left(k+\frac{3}{2}\right)^2$, we have
\begin{align*}
\lfloor 2n-2\sqrt{n} \rfloor &= \lfloor 2(k^2+3k+4)-2\sqrt{k^2+3k+4} \rfloor \\
&= 2(k^2+3k+4) - \lceil 2\sqrt{k^2+3k+4} \rceil \\
&= 2(k^2+3k+4) - (2k+4)  = 2k^2+4k+4.
\end{align*}
Therefore, $D$ has $n=k^2+3k+4$ vertices, ${{\lfloor 2n-2\sqrt{n} \rfloor}=2k^2+4k+4}$~edges and $k$ crossings such that $\TAR(D)=90^{\circ}$.
\end{proof}

\subsection{Graphs with $\TAR(G)> 90^{\circ}$}

\noindent If a graph has $\TAR(G)> 90^{\circ}$, then this graph is planar, since a crossing would imply that at least one angle is at most  $90^{\circ}$.
Also note that the construction for a graph with $\TAR(G)= 90^{\circ}$ and ${\lfloor 2n-2\sqrt{n} \rfloor}$ edges heavily relied on $4$-cycles.
So we can improve the bound for graphs with $\TAR(G)> 90^{\circ}$.
\begin{theorem}
Every graph $G$ with $n\geq 3$ vertices and $\TAR(G)>90^{\circ}$ has at most $\frac{3}{2}n-\frac{5}{2}$ edges.
This bound is tight for infinitely many values of $n$.
\end{theorem}

\begin{proof}
We observe that every vertex of a graph $G$ with $\TAR(G)>90^{\circ}$ has degree at most $3$.
This already gives an upper bound of $\frac{3}{2}n$ edges for graphs with $\TAR(G)>90^{\circ}$.
Let $D$ be a drawing of $G$ with $\TAR(D)>90^{\circ}$.
Then every vertex on the boundary of the convex hull of $D$ has degree at most $2$.
Further, consider the angles spanned by the convex hull edges of $D$.
Assume that this angle is at most $90^{\circ}$ for some convex hull vertex $v$.
If $v$ was incident to two edges of $D$, then these edges would span an angle of at most $90^{\circ}$.
So $v$ has degree at most $1$.

If there are at least $5$ vertices on the convex hull of $D$, then $D$ has at most $(n-5)$ vertices of degree $3$ and at least $5$ vertices of degree at most $2$. Therefore, $D$ has at most $\frac{3}{2}(n-5)+\frac{2}{2}\cdot 5=\frac{3}{2}n-\frac{5}{2}$ edges.

If there are exactly $4$ vertices on the convex hull of $D$, then at least one of the inside angles of the boundary of the convex hull is at most $90^{\circ}$.
Therefore, at least one vertex on the convex hull has degree $1$.
So $D$ has at most ${\frac{3}{2}n-\frac{5}{2}}$ edges.
Similarly, if there are exactly $3$ vertices on the convex hull of $D$, then at least two vertices of those have degree $1$.
Again $D$ has at most $\frac{3}{2}n-\frac{5}{2}$ edges. 
This means that every graph $G$ with at least $3$ vertices and $\TAR(G)>90^{\circ}$ has at most $\frac{3}{2}n-\frac{5}{2}$ edges.

\begin{figure}[hbt]
\centering
\subfloat[Drawing with $\TAR(D)=96^{\circ}$, $n=35$ vertices and $\frac{3}{2}n-\frac{5}{2}$ edges.\label{fig:TAR96_flower2}]{
		\includegraphics[page=1]{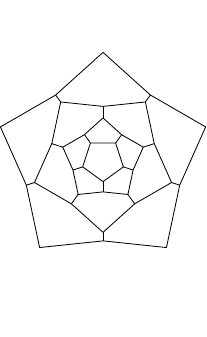}
	}
	\hfill
\subfloat[Extending Figure~\ref{fig:TAR96_flower2}. The gray lines are for construction purposes and $D$ is inside the cyan area.\label{fig:TAR96:expand}]{
		\includegraphics[page=2]{figures/TAR96_flower}
	}
\caption{Drawings with $\TAR(D)>90^{\circ}$}
\label{fig:TAR96_flower}
\end{figure}

Let $D$ be a drawing with $\TAR(D)\geq 96^{\circ}$, $n$ vertices and $\frac{3}{2}n-\frac{5}{2}$ edges such that the boundary $B$ of the convex hull of $D$ is a regular $5$-gon $p_1p_2p_3p_4p_5$, also illustrated in Figure~\ref{fig:TAR96:expand}.
Note, that a regular $5$-gon has these properties.
Let $c$ be the circumcenter of $B$ and let $K$ be a circle with center $c$ such that $D$ is inside $K$.
We call the crossing of the ray $cp_i$ with $K$ $q'_i$, for every $1\leq i \leq 5$.
The rays $cp_i$, $1\leq i \leq 5$, and the circle $K$ are gray in Figure~\ref{fig:TAR96:expand}.
The vertex $p'_i$ is the unique vertex with $\angle p_iq'_ip'_i=96^{\circ}$ and $\angle p'_{i}q'_{i+1}p_{i+1}=96^{\circ}$ for every $1\leq i \leq 5$ where $q'_6=q'_1$ and $p_6=p_1$.
Let $D'$ be $D$ together with the vertices $q'_i$ and $p'_i$ and the edges $p_iq'_i, q'_ip'_i$ and $p'_iq'_{i+1}$ for every $1\leq i \leq 5$.
By definition we have $\angle p_iq'_ip'_i=\angle p'_iq'_{i+1}p_{i+1}=96^{\circ}$.
Looking at the quadrilateral $cq'_ip'_iq'_{i+1}$ we have $\angle q'_ip'_iq'_{i+1}=96^{\circ}$.
Further the angles $\angle q'_{i+1}p_{i+1}p_{i}$ and $\angle p_{i+1}p_iq'_i$ are both inside an angle in $D'$ and both have $126^{\circ}$.
So we have $\TAR(D')=96^{\circ}$ with $n'=n+10$ vertices and $\frac{3}{2}n'-\frac{5}{2}$ edges.

Starting with a regular $5$-gon and doing this extension iteratively, there exists a drawing with $\TAR(D)=96^{\circ}$ with $n=10k+5$ vertices and $\frac{3}{2}n-\frac{5}{2}$ edges for every $k\geq 0$. 
For example, by doing this extension three times we get the drawing depicted in Figure~\ref{fig:TAR96_flower2}.
\end{proof}

\subsection{Graphs with $\TAR(G)>120^{\circ}$}

\noindent For angles $\alpha>120^{\circ}$ we prove a family of tight bounds for the number of edges of graphs $G$ with $\TAR(G)>\alpha$.
\begin{theorem}
Let $k>6$.
Every graph $G$ with $n$ vertices and $\TAR(G)\geq \frac{k-2}{k}180^{\circ}$ has at most $n$ edges for $n\geq k$, and at most $n-1$ edges otherwise.
These bounds are tight.
\end{theorem}

\begin{proof}
If a graph $G$ has $\TAR(G)>120^{\circ}$ then every vertex has degree at most two.
Any such graph is a collection of cycles, paths and isolated vertices.
So this graph has at most $n$ edges.
If $n\geq k$, then a regular $n$-gon $D$ has $\TAR(D)=\frac{n-2}{n}180^{\circ}>120^{\circ}$ and exactly $n$ edges.
This means that a graph $G$ with $\TAR(G)\geq \frac{k-2}{k}180^{\circ}$ has at most $n$ edges for $n\geq k$ and this bound is tight.

If $n<k$, then any cycle would prevent $\TAR(G)\geq \frac{k-2}{k}180^{\circ}$. This means a graph with $\TAR(G)\geq \frac{k-2}{k}180^{\circ}$ is cycle-free and has at most $n-1$ edges. 
On the other hand, if $G$ is a single path, then $\TAR(G)=180^{\circ}$. So a graph with $\TAR(G) \geq \frac{k-2}{k}180^{\circ}$ and $n<k$ vertices can have at most $n-1$ edges and this bound is tight.
\end{proof}

\section{NP-hardness}

\noindent Formann et al.~\cite{Formann1993DrawingGI} showed that the problem of determining whether a given graph $G$ admits a drawing with angular resolution of $90^{\circ}$ is \NP-hard. 
Their proof, which is by reduction from \threesat\ with exactly three different literals per clause, also implies the \NP-hardness of deciding whether $\TAR(G)=90^{\circ}$.
We adapt in the following their reduction to show the \NP-hardness of the decision problem for $\TAR(G)\geq 60^{\circ}$.

Note that every triangle of a drawing $D$ must be equilateral if $\TAR(D)\geq 60^{\circ}$ 
The idea of the construction is to build a rigid frame with triangles and add the clause gadgets such that they are also rigid; see Figure~\ref{fig:60hard:construction} for depictions of the frame and the gadgets.
Then, we add variable gadgets to the frame, such that they can only be oriented in two ways, which will correspond to the variable assignment.

\begin{theorem}\label{thm:hard}
It is \NP-hard to decide whether a graph $G$ has $\TAR(G)\geq 60^{\circ}$.
\end{theorem}

\begin{proof}
As input we are given a \threesat\ formula with variables $x_1,x_2,\dots,x_n$ and clauses $c_1, c_2,\dots,c_m$, where every clause contains exactly three different literals. Cook~\cite{Cook1971Complexity} showed that the decision question for satisfiability of such a \threesat\ formula is \NP-complete.

We first construct a graph $G$ for the formula.
The basic building blocks of our construction consist of triangles, which, in order to obtain a total angular resolution of $60^{\circ}$, must all be equilateral.
We use the following gadgets; see Figure~\ref{fig:60hard:gadget}.

As clause gadget we use a sequence of four triangles that share a common vertex and in which consecutive triangles share an edge.
The middle vertex with three incident edges, marked with $C_j$ in the figure, will 
be used to connect the clause gadget to its literals. We refer to $C_j$ as the \emph{clause vertex}.

As variable gadget we use a triangle followed by a sequence of $m$ hexagons and followed by another triangle.
Each hexagon consists of six triangles sharing the center point. 
Each non-extreme hexagon of the sequence is incident to its neighboring hexagons via two ``opposite'' edges.
The initial triangle is incident to the first hexagon via the edge opposite to the incidence with the second hexagon.
The final triangle is incident to the last hexagon via the edge opposite to the incidence with the second to last hexagon.
The vertices of the initial and the final triangle that are incident to none of the hexagons are denoted as $A_{i,1}$ and $A_{i,2}$, respectively.

For each variable $x_i$, we assign one side of the hexagonal path to the positive literal $x_i$ and the other to the negative literal $\overline{x}_i$. 
The intermediate vertices of the $j$th hexagon of the path are  
denoted with $X_{i,j}$ and $\overline{X}_{i,j}$, respectively, and are called \emph{literal vertices}.
 They will  
be used for connecting a literal to its clause.  

Additionally, we use a connector gadget. It consists of two triangles with a common edge.
The two vertices that are incident to only one of the triangles are denoted by $A_{i,3}$ and $A_{i,4}$, respectively.

Note that for all three gadgets, an embedding with total angular resolution $60^{\circ}$ is unique up to rotation, scaling and reflection of the whole gadget.
Especially, for each gadget, all triangles are congruent.  

\begin{figure}[hbt]
	\centering 
	\subfloat[All used gadgets\label{fig:60hard:gadget}]{
		\includegraphics[page=2]{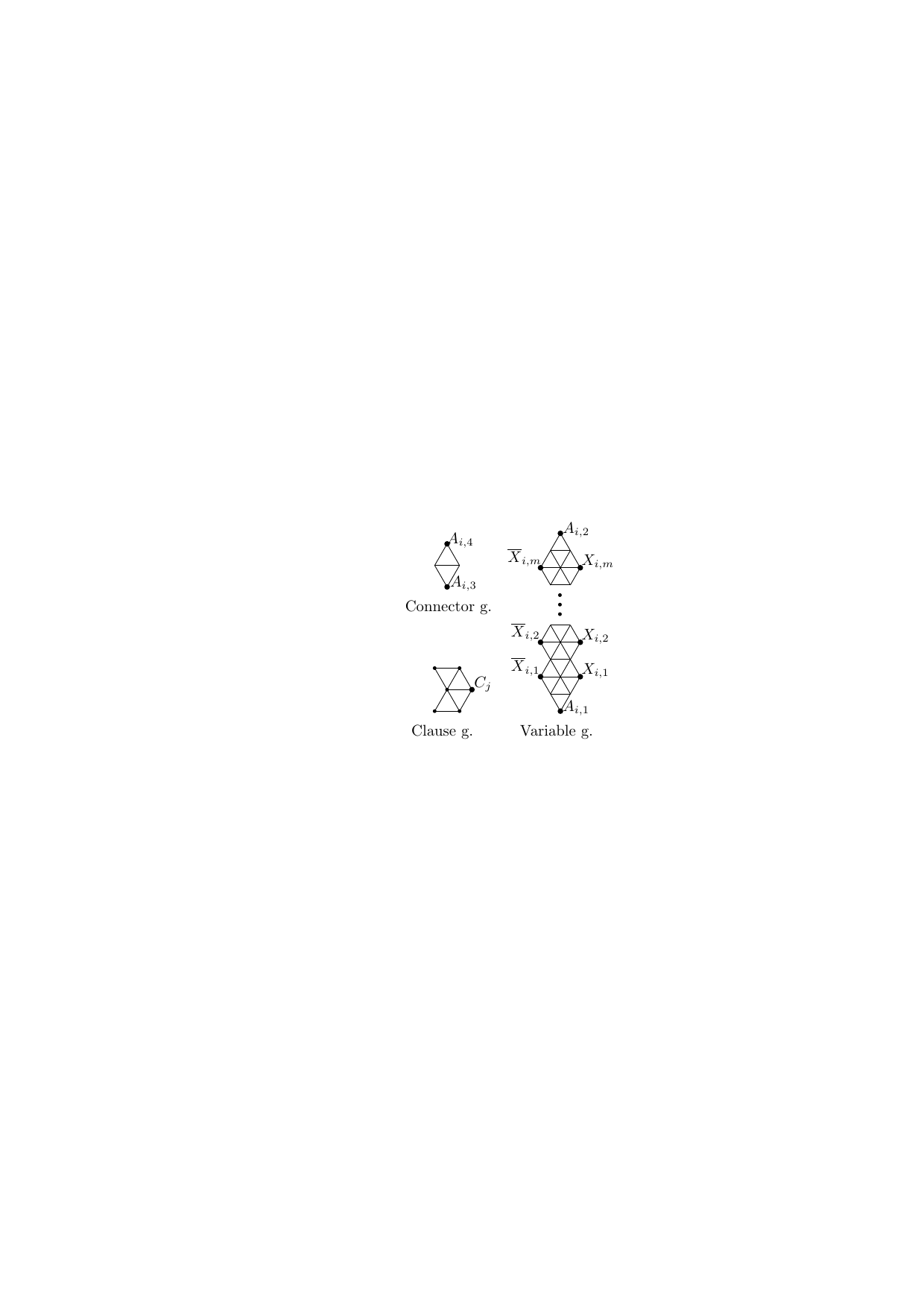}
	}
	\hfill
	\subfloat[Frame with clause gadgets\label{fig:60hard:basic_structure}]{
		\includegraphics[page=2]{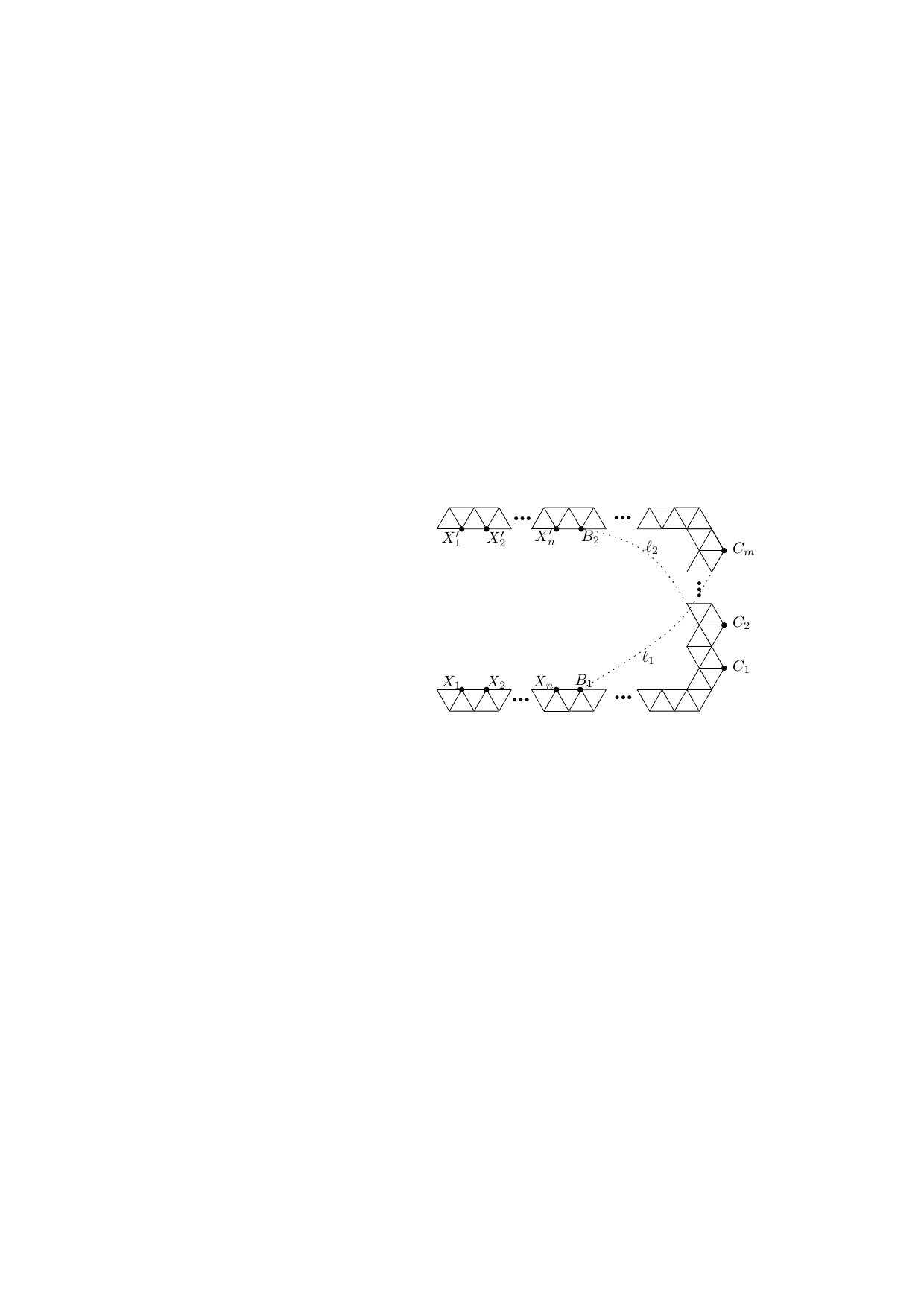}
	}
	\caption{Gadgets and frame of the \NP-hardness proof.}\label{fig:60hard:construction}
\end{figure}

For connecting the gadgets, we first build a rigid 3-sided frame as depicted in Figure~\ref{fig:60hard:basic_structure}.
On the bottom, it consists of a straight path of $2n+2m-1$ triangles that alternatingly face up and down (the \emph{bottom path}).
On top of the rightmost triangle of this path, we add a sequence of $m$ clause gadgets stacked on top of each other  
(one for each clause, with the clause vertices $C_1, \ldots, C_m$ facing to the right). 
The top of the figure consists of a straight path of ${2n+2m-1}$~triangles that alternatingly face down and up (the \emph{top path}).
We denote the leftmost $n+1$ vertices of degree three on the upper side of the bottom path with $X_1, \ldots, X_n$, and $B_1$.
The leftmost $n+1$ vertices of degree three on the lower side of the top path are denoted $X'_1, \ldots, X'_n$, and $B_2$.
An embedding with total angular resolution $60^{\circ}$ of this frame is again unique up to rotation, scaling, and reflection. 
We assume without loss of generality that it is embedded with $X_1, \ldots, X_n$ being on a horizontal line, as depicted in Figure~\ref{fig:60hard:basic_structure}. 
Then, for every $1\leq i \leq n$, $X'_i$ and $X_i$ lie on a vertical line. 
Further, the line $\ell_1$ spanned by $B_1$ and $C_m$ has slope $60^{\circ}$ and the line $\ell_2$ through $B_2$ and $C_1$ has slope $-60^{\circ}$. 

We next add the variable gadgets in the following way. 
For each variable $x_i$, we identify the vertex $A_{i,1}$ of its gadget with $X_i$.
Further, we connect the gadget to $X'_i$ via a connector gadget by identifying $A_{i,2}$ with $A_{i,3}$ and $A_{i,4}$ with~$X'_i$, respectively.
Note that in any drawing with total angular resolution $60^{\circ}$ of the construction so far, 
each variable gadget together with its connector gadget must be drawn vertically and between $X_i$ and $X'_i$.
Further, the variable gadgets can be scaled by adapting the height of the connector gadget. 
Independent of the scaling factor, the right side of each variable gadget is always to the left of the lines $\ell_1$ and $\ell_2$.  
Directionwise, variable gadgets can be drawn in two ways: either all $X_{i,j}$ are to the right of the $\overline{X}_{i,j}$ or the other way around.

To complete the construction,  
we add a path consisting of three consecutive edges between $X_{i,j}$ ($\overline{X}_{i,j}$) and $C_j$ whenever $x_i$ ($\overline{x}_i$) is a literal of clause $c_j$.
An example of $G$ with the \threesat\ formula $(x_1 \vee \overline{x}_2 \vee x_3) \wedge
(\overline{x}_1 \vee {x}_2 \vee \overline{x}_3) \wedge
(\overline{x}_1 \vee \overline{x}_2 \vee \overline{x}_3) \wedge
(x_1 \vee {x}_2 \vee x_3)$ is depicted in Figure~\ref{fig:60hard:full}.
To obtain a total angular resolution of $60^{\circ}$ at every clause vertex $C_i$, all of these paths must start from $C_i$ towards the right
and one of them must start horizontally. 
We claim that the constructed graph $G$ has a drawing $D$ with $\TAR(D)\geq 60^{\circ}$ if and only if the initial \threesat\ formula is satisfiable.

\begin{figure}[tb]
	\centering
	\subfloat[True connection, two versions\label{fig:60hard:true_connection}]{
			\includegraphics[page=1]{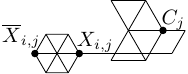}
			\qquad
			\includegraphics[page=2]{figures/hardness_for_geq_60/true_connection}
	}
	\hfill
	\subfloat[False connection\label{fig:60hard:false_connection}]{
		\includegraphics{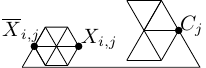}
	}
	\caption{Connections between clause and literal vertices in the \NP-hardness proof.}
	\label{fig:60hard:connections}
\end{figure}

\begin{figure}
  \centering
  \includegraphics{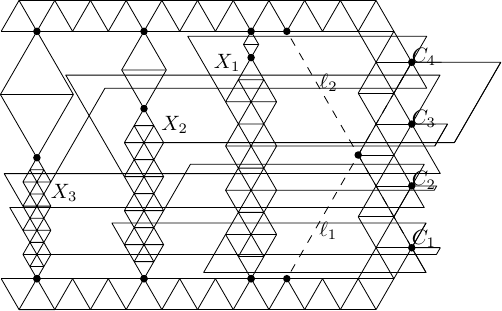}
  \caption{Construction of the graph $G$ for the \threesat\ formula $(x_1 \vee \overline{x}_2 \vee x_3) \wedge
(\overline{x}_1 \vee {x}_2 \vee \overline{x}_3) \wedge
(\overline{x}_1 \vee \overline{x}_2 \vee \overline{x}_3) \wedge
(x_1 \vee {x}_2 \vee x_3)$. For better readability, the variable gadgets are placed with a larger distance between them than in the description of the construction.
The corresponding truth assignment is: $x_1$ is false, $x_2$ is true, $x_3$ is true.}

  \label{fig:60hard:full}
\end{figure}

Assume first that the \threesat\ formula is satisfiable.
Consider a truth assignment of the variables that satisfies the formula.
We draw each variable gadget such that the side corresponding to its true literal is on the right.
Further, we scale all the variable gadgets such that no two vertices of different variable gadgets or of a variable gadget and a clause gadget lie on a common horizontal line (except for the vertices $X_i$ and $X'_i$).
For every clause $c_j$, we choose a literal $v_i \in \{x_i,\overline{x}_i\}$ of $c_i$ which is true. 
We draw the path between the corresponding clause vertex $C_j$ and the matching literal vertex $V_{i,j} \in \{X_{i,j},\overline{X}_{i,j}\}$ in the following way; see Figure~\ref{fig:60hard:true_connection}. We start with a horizontal edge from $C_j$ to the right. Then, we continue with a $\pm 60^{\circ}$ edge towards the left until we reach the height of $V_{i,j}$.
We complete the path with a horizontal edge towards the left to $V_{i,j}$.
For the other literals of $c_j$ we draw a $\pm 60^{\circ}$ edge from $C_j$ to the right, followed by a horizontal edge to the left and a $\pm 60^{\circ}$ edge to the left or right, depending on whether $v_i$ is true or false; see Figure~\ref{fig:60hard:connections}.
This way, all edges of the resulting drawing $D$ are either horizontal or under an angle of $\pm 60^{\circ}$ and no two edges overlap. Hence we have $\TAR(D)=60^{\circ}$ as desired.

For the other direction, assume that this graph $G$ admits a drawing $D$ with ${\TAR(D)=60^{\circ}}$.
In $D$, consider a clause vertex $C_j$ and the path ${P= C_j M_1 M_2 V_{i,j}}$ which starts horizontally towards the right at $C_j$.
Then, the literal vertex $V_{i,j}$ must be on the right side of its variable gadget:
If $V_{i,j}$ is a left vertex of a variable gadget, then $P$ must enter $V_{i,j}$ from the left under an angle of at most $\pm 60^{\circ}$ with respect to the horizontal line.
Hence, $M_2$ lies to the left of the lines $\ell_1$ and~$\ell_2$. 
On the other hand, the second vertex $M_1$ of $P$ lies horizontally to the right of $C_j$.
However, to respect the $60^{\circ}$ restriction at $M_1$, $M_2$ must lie to the right of the lines $\ell_1$ and $\ell_2$, a contradiction.
Now consider the set of literal vertices that are an endpoint of a path starting horizontally at some clause vertex.
As these literal vertices are on the right side of their corresponding variable gadgets, the set does not contain any pair $X_{i,j}, \overline{X}_{i,k}$. 
By setting all the corresponding literals to true, we obtain a non-contradicting 
(possibly partial) truth assignment of the variables which has at least one literal set true for every clause.
Hence, any completion to a truth assignment of all variables satisfies the formula.
\end{proof}

\section{$\TAR$ critical graph}

In Section~\ref{sec:upperbounds} we provided upper bounds on the number of edges a graph with a given total angular resolution can have, where the focus was on $60^{\circ}$ and $90^{\circ}$.
In the previous section, we have seen that deciding if a graph can be drawn with a total angular resolution of a least $60^{\circ}$ is NP-hard.
So naturally it is of central interest to better understant the structure of graphs that do not allow for a certain total angular resolution.
In the following we shed some light on graphs $G$ with $\TAR(G)=60^{\circ}$, such that removing any single edge from $G$ increases its total angular resolution.
In a certain sense these graphs have the minimal structure that forces $\TAR(G)$ to be $60^{\circ}$.
Thus, we call such graphs $\TAR(G)-60^{\circ}$ critical graphs (see below for a proper definition).
A better understanding of these graphs will help to see their structure and why some graphs need $\TAR(G)=60^{\circ}$ while other, very similar, graphs can be drawn with a larger total angular resolution.
We round up this picture in Subsection~\ref{sec:TAR_near_complete} by considering almost complete $\TAR$ critical graphs.

\subsection{$\TARC[60^{\circ}]$ critical graphs}\label{sec:small_graphs_small_angle}

\noindent In this section we give a construction of a family of graphs that have a small number of edges and ${\TAR(G)\leq 60^{\circ}}$.
Since we can construct connected graphs with a small number of edges and $\TAR(G)\leq 60^{\circ}$ by taking a triangle and adding a path to it, we only look at so called \emph{$\TARC$ critical} graphs.
These are connected graphs with $\TAR(G)\leq \alpha$, but for all edges $e \in E$, $\TAR(G\backslash\{e\})>\alpha$ holds.
In other words, $\TAR(G)\leq \alpha$ and every proper subgraph $H$ of a $\TARC$ critical graph $G$ has $\TAR(H)>\alpha$. 
We focus on $\TARC[60^{\circ}]$ critical graphs.

\begin{theorem}\label{thm:60critical}
There exist $\TARC[60^{\circ}]$ critical graphs on $n$ vertices with $\frac{3}{2}n$ edges for infinitely many values of $n$.
\end{theorem}

This means there exist graphs with much fewer than $2n-6$ edges, which have $\TAR(G) \leq 60^{\circ}$.
We prove Theorem~\ref{thm:60critical} by giving a construction of such graphs.
Before we state our construction, we consider two $4$-cycles that share an edge.
\begin{lemma}\label{lemma:4gons}
Two $4$-cycles that share an edge (denoted by $L$) can be embedded with $\TAR(D)>60^{\circ}$ only if $L$ is drawn combinatorially equivalent to $E7$ in Figure~\ref{fig:exceptionsAll}. 
\end{lemma}

\begin{proof}
First note that $L$ has $6$ vertices and $7$ edges.
Hence, $L$ is an exception for Theorem~\ref{theorem:2n-6}.
Therefore, $L$ is drawn like $E7$ in Figure~\ref{fig:exceptionsAll}.
\end{proof}

With the help of Lemma~\ref{lemma:4gons} we show the existence of a graph with $\frac{3}{2}n$ edges and $\TAR(G)\leq 60^{\circ}$. 
To this end, we first define a graph $\MS_k$ consisting of a sequence of 4-cycles glued together along opposite edges,
which is essentially a circularly closed ladder graph on a M\"obius strip.

\begin{definition}
We define the graph $\MS_k$ as follows.
Let $u_i,v_i$, $0\leq i \leq k-1$, be the vertices of $\MS_k$.
Further, we define $u_k=v_0$ and $v_k=u_0$.
The edges of $\MS_k$ are
\begin{itemize}
\item $(u_i,v_i)$ for $0\leq i \leq k$,
\item $(u_i,u_{i+1})$ and $(v_i,v_{i+1})$ for $0\leq i \leq k-1$.
\end{itemize}
\end{definition}
Figure~\ref{fig:60_minimal_family} depicts the graph $\MS_6$, where the dashed and the dashed-dotted edge each highlight one instance of the two different edge types.
Observe that $\MS_k$ has $n=2k$ vertices and $\frac{3}{2}n=3k$ edges.

\begin{figure}
\centering
\includegraphics[page=3]{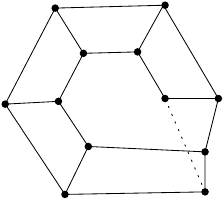}
\caption{Graph $\MS_6$ with $\TAR(\MS_6)=60^{\circ}$ and $\frac{3}{2} n$ edges.}
\label{fig:60_minimal_family}
\end{figure}

\begin{lemma}\label{lemma:MS:critical}
Let $k\geq 6$ be an integer. Then $\MS_k$ is a $\TARC[60^{\circ}]$ critical graph.
\end{lemma}

\begin{proof}
First we show that $\TAR(\MS_k)\leq 60^{\circ}$.
We assume to the contrary that we can embed $\MS_k$ with $\TAR(\MS_k)>60^{\circ}$.
Define $C_i$ as the $4$-cycle $u_i u_{i+1} v_{i+1} v_i$ for $0\leq i \leq k-1$. 
If we can embed $\MS_k$ with $\TAR(\MS_k)>60^{\circ}$, then every $C_i$ is a plane $4$-cycle for every $0\leq i\leq k-1$ and $C_i$ and $C_{i+1}$ are interior disjoint (with $C_{k}=C_0$) due to Lemma~\ref{lemma:4gons}.
First place a point $c_i$ into every $C_i$.
We draw a closed curve~$B$  through all $c_i$ such that between $c_i$ and $c_{i+1}$ in $C_i$ and $C_{i+1}$, $B$ crosses only the edge $u_{i+1}v_{i+1}$ and this edge is crossed only once between $c_i$ and $c_{i+1}$.  

This is possible, because $C_i$ and $C_{i+1}$ are interior disjoint for any $i$.
Since $C_i$ and $C_{i+1}$ are interior disjoint and every $C_i$ is plane, all vertices $u_i$ are on the same side and all the vertices $v_i$ are on the other side of $B$ if we walk along $B$.
But then $u_0$ and $v_k$ are on different sides which gives us the contradiction since $u_0=v_k$.
Therefore $\TAR(\MS_k)\leq 60^{\circ}$.

To show that $\MS_k$ is $\TARC[60^{\circ}]$ critical, we have to prove that for all edges~$e$ ${\TAR(\MS_k\backslash \{e\})>60^{\circ}}$ holds.
In Figure~\ref{fig:60_minimal_family} we see that the graph consists of edges like $e_1$, which are incident to one cycle of length $4$ and edges like~$e_2$, which are incident to two cycles of length $4$.
So we have two cases:
Does $\TAR(\MS_6\backslash \{e_1\})>60^{\circ}$ hold and does $\TAR(\MS_6\backslash \{e_2\})>60^{\circ}$ hold?

\begin{figure}[hbt]
	\centering 
	\subfloat[Drawing of $\MS_6\backslash\{e_1\}$ with $\TAR(\MS_6\backslash \{e_1\})>60^{\circ}$.\label{fig:tar_minimal_e1}]{
		\includegraphics[page=1]{figures/tar_minimal_embedable}
	}
	\hfill
	\subfloat[Drawing of $\MS_6\backslash\{e_2\}$ with $\TAR(\MS_6\backslash \{e_2\})>60^{\circ}$.\label{fig:tar_minimal_e2}]{
		\includegraphics[page=2]{figures/tar_minimal_embedable}
	}
	\caption{Drawings of $\MS_6$ without one edge.} 
\end{figure}

Figure~\ref{fig:tar_minimal_e1} and Figure~\ref{fig:tar_minimal_e2} depict how ${\MS_6\backslash\{e_1\}}$ and $\MS_6\backslash\{e_2\}$, respectively, can be embedded with ${\TAR(\MS_6\backslash\{e_1\})>60^{\circ}}$.
The dotted edge in both figures is the removed edge.
$\MS_k$ with $k\geq 6$ can be embedded in a similar way (for example, by apropriately subdividing the two opposite edges of a crossing-free 4-cycle that are incident to only that 4-cycle and connecting the subdivision points).
This completes the proof that $\MS_k$ is $\TARC[60^{\circ}]$ critical. 
\end{proof}
Now we have all the results to prove Theorem~\ref{thm:60critical}.
\begin{proof}[Proof of Theorem~\ref{thm:60critical}]
The graph $\MS_k$ with $k\geq 6$ is $\TARC[60^{\circ}]$ critical by Lemma~\ref{lemma:MS:critical}.
Furthermore $\MS_k$ has $\frac{3}{2}n$ edges since it is cubic.
Therefore there exist $\TARC[60^{\circ}]$ critical graphs on $n$ vertices with $\frac{3}{2}n$ edges for infinitely many values $n$.
\end{proof}

\subsection{Almost complete graphs}\label{sec:TAR_near_complete}

\noindent 
Let $K_n$ be the complete graph on $n$ vertices.
Argyriou, Bekos and Symvonis~\cite{Argyriou2013MaximizingTT} showed that $\TAR(K_n)=\frac{180^{\circ}}{n}$.
In this section we show how the deletion of a few edges affects the total angular resolution. 
We start by showing that the removal of a small number of edges does not change the total angular resolution.

\begin{theorem}\label{thm:almost_complete}
Every graph $G$ with $n$ vertices and at least ${n \choose 2}-\frac{n-3}{3}$ edges has $\TAR(G)=\frac{180^{\circ}}{n}$.
\end{theorem}

\begin{proof}
Consider a drawing $D$ of the complete graph $K_n$ with $(n-k)$ vertices on the boundary $B$ of the convex hull and $k$ inner vertices.

A triangle $T$ of $D$ is called special if its vertices are on $B$ and the three inner angles of $T$ are split in total into at least $n$ angles in $D$. 
Note, that the existence of a special triangle implies $\TAR(D)\leq\frac{180^{\circ}}{n}$.
If we delete a set $E$ of at most $\frac{n-k-3}{2}$ edges of $D$, then there are three vertices on $B$ which are not incident to any deleted edge.
So these three vertices span a special triangle of $D\backslash{E}$ where $D\backslash{E}$ is the drawing $D$ without the edges in $E$.

On the other hand, $B$ is an $(n-k)$-cycle and its inner angles sum up to $(n-k-2)180^{\circ}$.
Since we have $K_n$, the inner angles of $B$ are split into $(n-k)(n-2)$ angles.
If the inner angles of $B$ are split into at least $(n-k-2)n$ angles then the total angular resolution is at most $\frac{180^{\circ}}{n}$.
So we can delete up to $\frac{1}{2}((n-k)(n-2)-(n-k-2)n)=k$ edges and still have a drawing with $\TAR(D)\leq\frac{180^{\circ}}{n}$.

Therefore, we want to minimize the maximum of $k$ and $\frac{n-k-3}{2}$ over all possible values of $k$. 
This minimum is obtained for 
	$k=\frac{n-3}{3}$.
	So any graph $G$ with at least $\binom{n}{2}-\frac{n-3}{3}$ edges still has $\TAR(G)=\frac{180^{\circ}}{n}$.
\end{proof}

Starting from the complete graph $K_n$, Theorem~\ref{thm:almost_complete} implies that we 
have to delete more than $\frac{n-3}{3}$ edges to increase the total angular resolution.
On the other hand, we can improve the total angular resolution by deleting $n-2$ edges, which are incident to the same vertex.
This creates a graph $G'$ that is essentially $K_{n-1}$ with an additional vertex connected to the $K_{n-1}$ by a single edge and thus $\TAR(G') = \frac{180^{\circ}}{n-1}> \frac{180^{\circ}}{n}$.
 We now show that the total angular resolution can be increased by removing even fewer edges.

\begin{proposition}\label{thm:complete:all}
For any $n \geq 12$ there exists a graph $G$ with $n$ vertices, at least ${n \choose 2}- \frac{11 n}{12}+1$ edges and $\TAR(G)\geq\frac{180^{\circ}}{n-1}$.
\end{proposition}

\begin{proof}
We take a drawing $D$ of $K_{n-1}$ where the vertices $v_1, v_2, \dots, v_{n-1}$ span a regular $(n-1)$-gon $P$. 
Note that $\TAR(D)=\frac{180^{\circ}}{n-1}$.
Let $c$ be the circumcenter of $P$ and $C$ be the corresponding circumcircle.
Let $p$ be a point on the line spanned by $c$ and $v_1$ such that $|cv_1|=|v_1p|$ as in Figure~\ref{thm:complete:all}.
Observe that the angle $\angle v_ipv_{i+1}<\angle pv_{i+1}v_i$ since $v_iv_{i+1}$ is the shortest edge of the triangle $pv_iv_{i+1}$.

\begin{figure}
\centering
\includegraphics{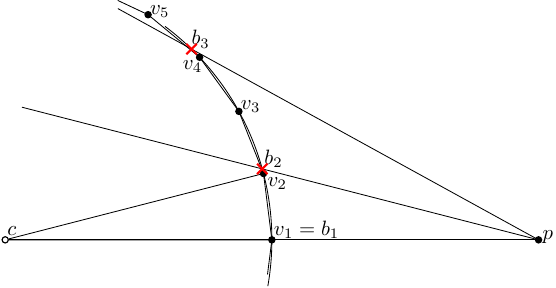}
\caption{Illustration of the proof of Proposition~\ref{thm:complete:all}. Vertices of the graphs are marked with black disks.}
\label{fig:complete_n_even}
\end{figure}

Let the tangents of $C$ through $p$ touch $C$ at the points $t_1$ and $t_2$.
Let $t_1$ be the tangent point which lies on the arc between $v_a$ and $v_{a+1}$ such that $v_a$ is closer to $v_1$ than $v_{a+1}$.
Observe that $\angle v_{i-1}pv_{i} > \angle v_ipv_{i+1}$ holds for $1<i<a$.
Since $|cp|=2\cdot |ct_1|$ and $\angle ct_1p= 90^{\circ}$, the triangle $ct_1p$ is half of an equilateral triangle.
So $\angle pct_1 = 60^{\circ}$ and $\angle cpt_1 = 30^{\circ}$.
Further $\angle cv_1v_2 < 90^{\circ}$ holds and therefore $\angle v_1pv_2 < \angle v_2cv_1 = \frac{360^{\circ}}{n-1}$.
So $\angle v_ipv_{i+1}<\frac{360^{\circ}}{n-1}$ holds for $1\leq i<a$.

We place points $b_j$ on $C$ on the shorter arc between $v_1$ and $v_a$ as depicted in Figure~\ref{fig:complete_n_even} such that
$b_1$ is $v_1$ and $\angle b_j p b_{j+1} = \frac{360^{\circ}}{n-1}$ for $1\leq j \leq \lfloor \frac{n-1}{12} \rfloor$.
Note that the bound $\lfloor \frac{n-1}{12} \rfloor$ is implied by $\angle cpt_1 = 30^{\circ}$.
In Figure~\ref{fig:complete_n_even} the points $b_2$ and $b_3$ are marked with a red cross.
Since $\angle v_ipv_{i+1}<\frac{360^{\circ}}{n-1}$ for $1\leq i<a$, there is an index $k$ with $1<k<a$ such that $v_k$ is on the arc between $b_j$ and $b_{j+1}$ including $b_{j+1}$ for all $1\leq j \leq \lfloor \frac{n-1}{12} \rfloor$.

Let $S_1$ be a set of vertices of $P$ such that for any $1\leq j \leq \lfloor \frac{\lfloor \frac{n-1}{12} \rfloor}{2} \rfloor$ there exists exactly one point $v_k$ of $S_1$ such that $v_k$ is on the arc from $b_{2j}$ to $b_{2j+1}$.
So $S_1$ contains $\lfloor \frac{\lfloor \frac{n-1}{12} \rfloor}{2} \rfloor$ points.
Note that $v_1$ is not in $S_1$.

Let $S_2$ be the set of vertices of $P$ such that $v_{n+1-i} \in S_2$ if and only if $v_i \in S_1$.
Let $S= S_1 \cup S_2 \cup \{v_1\}$.
If $v_i,v_k \in S_1$ or $v_i,v_k \in S_2$, then $\angle v_ipv_k \geq \frac{360^{\circ}}{n-1}$ holds by construction.
If $v_i \in S_1$, then $\angle v_1pv_i \geq \frac{360^{\circ}}{n-1}$ holds since $\angle v_1pv_i > \angle v_1pb_2$. If $v_i \in S_2$, then $\angle v_1pv_i \geq \frac{360^{\circ}}{n-1}$ follows in a similar way.
Therefore, for any two  points $v_i,v_k \in S$, $\angle v_ipv_k \geq \frac{360^{\circ}}{n-1}$ holds.

So the graph $G$ with vertices $v_1, \dots, v_{n-1}$ and $p$, and edges $v_iv_j$ for any $1\leq i,j \leq n-1$ and $pv_i$ for $v_i \in S$ has $\TAR(G) \geq \frac{180^{\circ}}{n-1}$ and ${n-1 \choose 2} + 2 \lfloor \frac{\lfloor \frac{n-1}{12} \rfloor}{2} \rfloor +1$ edges.
Hence, $G$ has at least ${n \choose 2}- \frac{11 n}{12}$ edges.
\end{proof}

Proposition~\ref{thm:complete:all} holds for all $n\geq 12$ but is most likely not tight for infinitely many values of $n$.
If the number of vertices is odd, then the following proposition gives us a better bound for the number of edges.

\begin{proposition}\label{thm:complete:odd}
For any odd $n\geq 5$ there exists a graph $G$ with $n$ vertices, ${n \choose 2}-\frac{n-1}{2}$ edges and $\TAR(G)\geq\frac{180^{\circ}}{n-1}$.
\end{proposition}

\begin{proof}
This is achieved by the following construction. 
We take a drawing $D$ of $K_{n-1}$ where the vertices $v_1, v_2, \dots, v_{n-1}$ form a regular $(n-1)$-gon. 
Next, we replace the common crossing of all main diagonals $(v_i, v_{i+(n-1)/2})$, $1\leq i \leq (n-1)/2$, by a vertex $v_n$.
We also replace every main diagonal $(v_i, v_{i+(n-1)/2})$ by the edges $(v_i, v_n)$ and $(v_n, v_{i+(n-1)/2})$ for every $1\leq i \leq \frac{n-1}{2}$. 
We denote the resulting drawing with $D'$. Figure~\ref{fig:complete_n_odd} depicts $D'$ for $n=9$ vertices.

\begin{figure}
\centering
\includegraphics{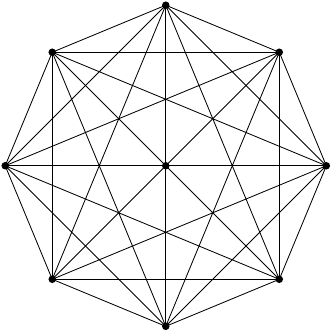}
\caption{A drawing with $32$ edges and $\TAR(D)=22.5^{\circ}$.}
\label{fig:complete_n_odd}
\end{figure}

Since we only replaced edges and do not have edges which are on top of each other, we have $\TAR(D')=\TAR(D)=\frac{180^{\circ}}{n-1}$.
Further, $D'$ has $\frac{(n-1)^2}{2}$ edges. So $D'$ has $\frac{n-1}{2}$ edges fewer than $K_n$.
\end{proof}

We have shown that in a complete graph $K_n$ we can delete $\frac{11 n}{12}-1$ edges for arbitrary $n$ and $\frac{n-1}{2}$ edges for odd $n$ to increase the total angular resolution.
For odd $n$, deleting any $\frac{n-3}{3}$ edges does not affect the total angular resolution but deleting $\frac{n-1}{2}$ edges can improve it.
We conjecture that Proposition~\ref{thm:complete:odd} is tight.


\section{Some special graphs}

\noindent In this section we present some special graphs and their total angular resolution. 
\subsection{The Petersen graph}

\noindent First we study the Petersen graph.
\begin{theorem}
Let $G_P$ be the Petersen graph.
Then $\TAR(G_P)=60^{\circ}$.
\end{theorem}

\begin{proof}
$G_P$ has 10 vertices and 15 edges.
By Theorem~\ref{theorem:2n-6}, since $m>2n-6$ and $G_P$ is not in the exceptions, we have $\TAR(G_P)\leq 60^{\circ}$.
On the other hand, two drawings of $G_P$ with total angular resolution of $60^{\circ}$ are shown in Figure~\ref{subfigure:petersen_tar}.
This means that we have $\TAR(G_P)=60^{\circ}$.
\end{proof}

\begin{figure}[hbt]
	\centering 
	\subfloat[Classic drawing of the Petersen graph\label{subfigure:petersen}]{
		\includegraphics[page=1]{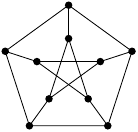}
	}
	\hspace{10mm}
	\subfloat[Drawings of the Petersen graph with $\TAR(D)=60^{\circ}$\label{subfigure:petersen_tar}]{
		\includegraphics[page=2]{figures/petersen_graph}
	}
	\caption{Drawings of the Petersen graph.} 
\end{figure}

The Petersen graph was generalized in the following way~\cite{Coxeter1950Self}.
\begin{definition}
	Given two integers $n \geq 3$ and $k$ with $1\leq k<\frac{n}{2}$, the \emph{generalized Petersen graph} $\GPG(n,k)$ is defined as follows.
Denote the vertices of the graph $u_0,u_1,\dots,u_{n-1},v_0,v_1,\dots,v_{n-1}$.
Then the edges of\, $\GPG(n,k)$ are $u_i u_{i+1}$, $u_i v_i$ and $v_i v_{i+k}$ for $i=0,\ldots,n-1$, where $v_{i+k}:=v_{i+k-n}$ if $i+k>n-1$ and  $u_{n}:=u_{0}$.
\end{definition}

\begin{figure}
\centering
\includegraphics{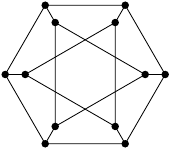}
\label{fig:gpg6_2}
\caption{Classic drawing of $\GPG(6,2)$.}
\end{figure}

Note that the Petersen graph is the graph $\GPG(5,2)$.
The question is to determine for which values of $n$ and $k$ the generalized Petersen graph admits a drawing with larger total angular resolution than the Petersen graph.
Note that $\TAR(\GPG(n,n/3))\leq 60^{\circ}$ since $\GPG(n,n/3)$ contains triangles. Especially, $\TAR(\GPG(6,2)) = 60^{\circ}$ is obtained by the classic drawing of $\GPG(6,2)$.
In the following lemma we show that $\TAR(\GPG(n,2))>60^{\circ}$ for any $n\geq 7$.

\begin{proposition}
	For any $n\geq 7$, the generalized Petersen graph $\GPG(n,2)$ has $\TAR(\GPG(n,2))>60^{\circ}$.
\end{proposition}

\begin{figure}
	\subfloat[Drawing $D(7)$ of $\GPG(7,2)$ with ${\TAR(D(7))>60^{\circ}}$\label{subfigure:petersen7}]{
		\includegraphics[page=7]{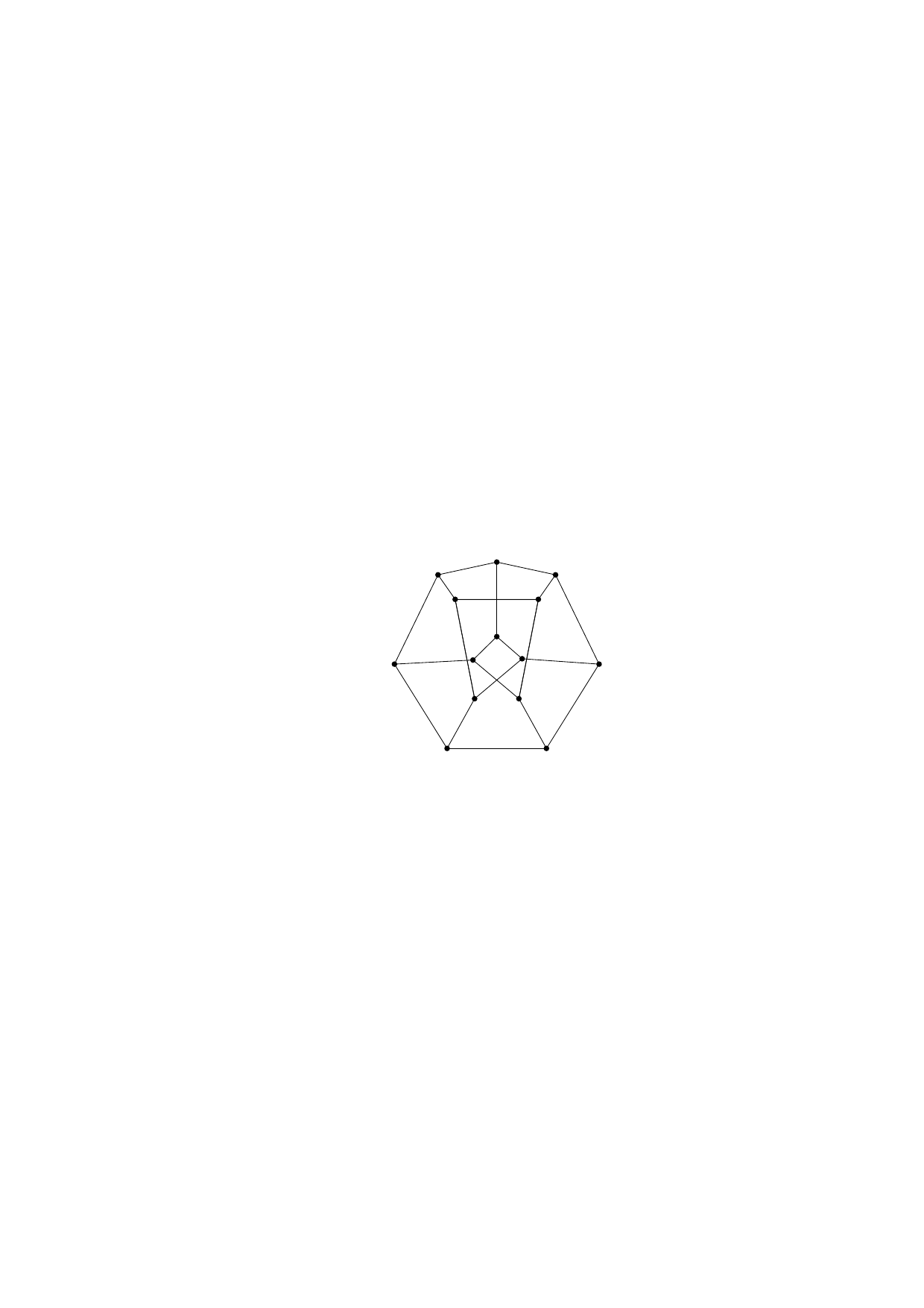}
	}
	\hspace{10pt}
	\subfloat[Construction of the Drawing $D(9)$ of $\GPG(9,2)$\label{subfigure:petersen7_2}]{
		\includegraphics[page=6]{figures/genpeter_7_2}
	}
	\caption{Drawings of $\GPG(n,2)$ with  odd $n$ and $\TAR(D)>60^{\circ}$}
	\label{fig:genpeter}
\end{figure}

\begin{proof}
We give a construction for a drawing $D(n)$ of $\GPG(n,2)$ for every $n\geq 7$  with $\TAR(D(n))>60^{\circ}$.
The drawing depends on whether $n$ is even or odd.
\begin{description}
\item[Case 1: $n\geq 7$ is odd.]
Figure~\ref{subfigure:petersen7} depicts a drawing $D(7)$ of $\GPG(7,2)$ with $\TAR(D(7))>60^{\circ}$.
For any odd $n>7$, we modify $D(7)$ to a drawing~$D(n)$, as illustrated in Figure~\ref{subfigure:petersen7_2} for $D(9)$, such that $D(n)$ is a drawing of $\GPG(n,2)$ with $\TAR(D(n))=\TAR(D(7))>60^{\circ}$.
Note, that the edges $u_6v_6$ and $v_5v_0$ cross with a $90^{\circ}$ angle in $D(7)$.

First we take $D(7)$ and rename the point $u_6$ to $u'_6$ and $v_6$ to $v'_6$.
We place the point $v_6$ in the interior of the line segment $v_4v'_6$.
Then we place $u_6$ on $u_5u'_6$ such that $u_6v_6$ is parallel to $u'_6v'_6$
and  $v_{n-1}$ on $v_1v'_6$ such that $v_6v_{n-1}$ is parallel to $v_5v_0$.
Further we place $u_{n-1}$ on $u_0u'_6$ such that $u_6u_{n-1}$ is parallel to $v_5v_0$.

For $7\leq i \leq n-2$ we place points $u_i$ in the interior of the line segment $u_6u_{n-1}$ from left to right.
For odd $i$ we place $v_i$ onto $v_5v_0$ such that $u_iv_i$ is parallel to $u'_6v'_6$.
For even $i$ we place $v_i$ onto $v_6v_{n-1}$ such that $u_iv_i$ is parallel to $u'_6v'_6$.
Then the vertices $u_i$ and $v_i$, $0\leq i \leq n-1$, span the drawing $D(n)$ of $\GPG(n,2)$.

All angles of $D(n)$ that already appear in $D(7)$ are greater than $60^{\circ}$.
Further, all angles that are in the subdrawing induced by the vertices $u_i$ and $v_i$ with $6\leq i \leq n-1$ have $90^{\circ}$ by construction.
Also the angles incident to one of the vertices $u_6, v_6, u_{n-1}, v_{n-1}$ are greater than $60^{\circ}$.
Therefore, $\TAR(D(n))=\TAR(D(7))>60^{\circ}$.

\item[Case 2: $n\geq 8$ is even.]
If $n$ is even, then $\GPG(n,2)$ contains the two cycles $v_0 v_2 \dots v_{n-2}$ and $v_1 v_3 \dots v_{n-1}$ which are both of length $\frac{n}{2}$.
We draw $u_0,u_1,\dots,u_{n-1}$ as a regular polygon $P_u$.
Let $c$ be the center of $P_u$. 
We place the points $v_i$, with even $i$ 
inside $P_u$ such that for each $i\in\{0, 2, 4,\dots, n-2\}$, $v_i$ is on the line segment $u_ic$, and the length of all such line segments $u_iv_i$ is identical. 
We then place the points $v_i$ with odd $i$ 
inside the polygon spanned by $v_0,v_2,v_4,\dots,v_{n-2}$ such that for each $i\in\{1, 3, 5,\dots, n\!-\!1\}$, $v_i$ is on the line segment $u_ic$,
and the length of all such line segments 
$u_iv_i$ is again identical. 
Denote the resulting drawing by $D(n)$; see Figure~\ref{fig:gp8-2} for a depiction of $D(8)$. 
\begin{figure}
	\centering
	\includegraphics[page=2]{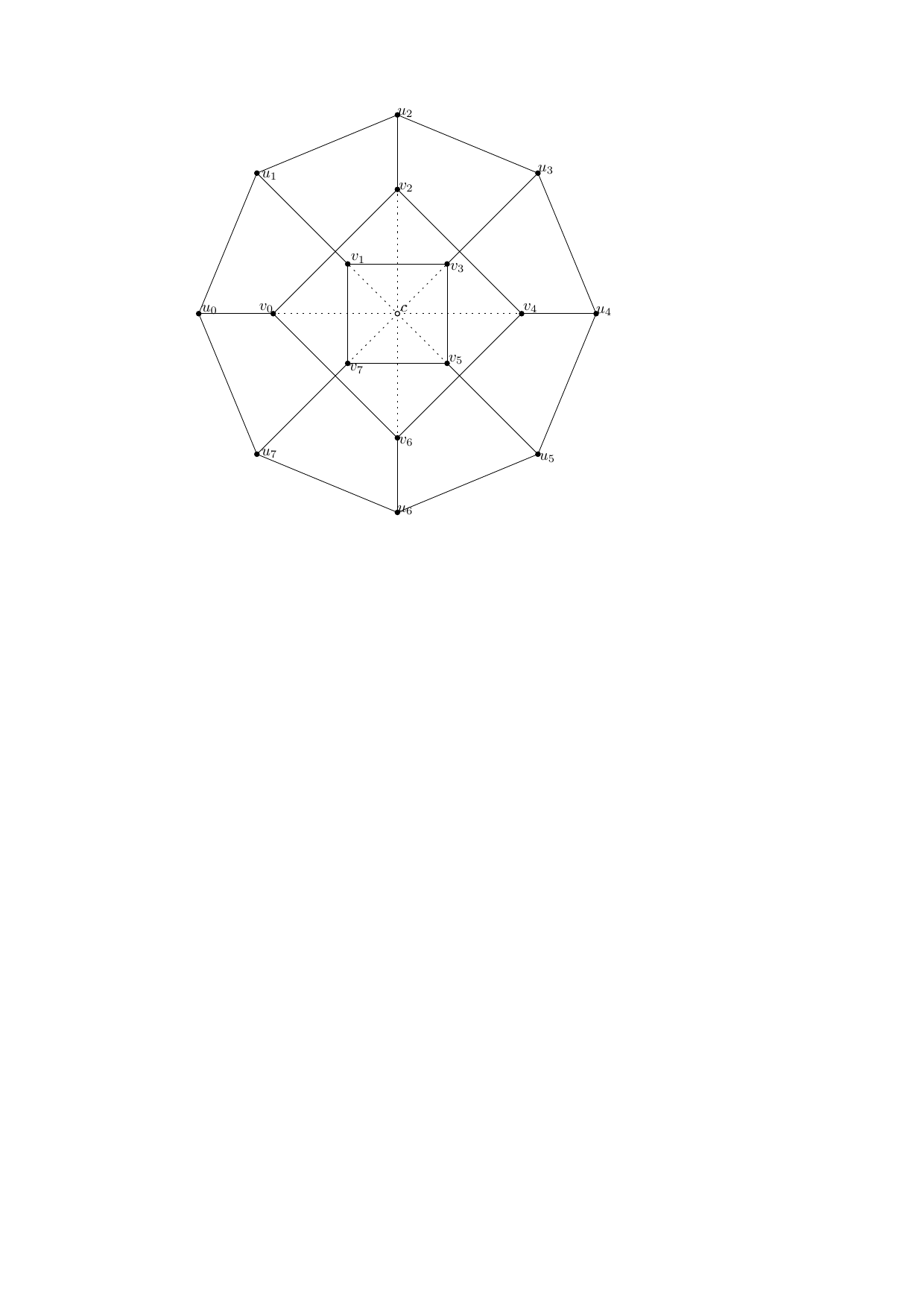}
	\caption{Drawing of $\GPG(8,2)$ with $\TAR(D)=67,5^{\circ}$}
	\label{fig:gp8-2}
\end{figure}
In $D(n)$, every crossing is between two line segments $u_iv_i$ and $v_{i-1}v_{i+1}$ for odd $i$ with $1\leq i \leq n-1$ where $v_n=v_0$. 
Since $u_{i-1},v_{i-1}$, and $c$ are collinear, and since $|u_{i-1}c|=|u_{i+1}c|$ and $|u_{i-1}v_{i-1}|=|u_{i+1}v_{i+1}|$, also $|v_{i-1}c|=|v_{i+1}c|$ holds.
Hence $v_{i-1}v_{i+1}c$ is an isosceles triangle.
The line spanned by $u_i$ and $v_i$ is the angle bisector of the angle $\angle v_{i-1}cv_{i+1}$.
Therefore, the angles at the crossing of $u_iv_i$ and $v_{i-1}v_{i+1}$ are $90^{\circ}$.
This means that an angle in $D(n)$ with at most $60^{\circ}$ cannot be at a crossing of the drawing.
It remains to consider the angles at each vertex. The smallest such angles appear at the vertices $u_i$ and have size $\frac{90^{\circ} (n-2)}{n}$.
It follows, that $\TAR(D(n))\geq \frac{90^{\circ} (n-2)}{n} > 60^{\circ}$ since $n\geq 8$.
\end{description}
\end{proof}

\subsection{Graphs where crossings help to improve}

\noindent In this section we present some planar graphs $G$ where any drawing $D_G$ with $\TAR(D_G)=\TAR(G)$ is not a plane drawing. Van Kreveld~\cite{van2010quality} already showed that the angular resolution of RAC drawings is in some cases better than the angular resolution of plane drawings. 
Our main interest for this section is to find a graph, where $\TAR(D_G)\leq 60^{\circ}$ if $D_G$ is plane but $\TAR(G)>60^{\circ}$.

\begin{figure}[hbt]
	\centering 
	\subfloat[Plane $K_4$ with $\TAR(D)=30^{\circ}$\label{subfigure:K4:plane}]{
		\includegraphics[page=1]{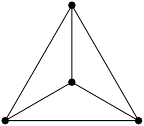}
	}
	\hspace{50pt}
	\subfloat[$K_4$ with crossing and $\TAR(D)=45^{\circ}$\label{subfigure:K4:cross}]{
		\includegraphics[page=2]{figures/K4}
	}
	\caption{Two drawings of the $K_4$.} 
\end{figure}

If we look at the complete graph $K_4$, we see in Figure~\ref{subfigure:K4:plane}, that every plane drawing has $\TAR(D)\leq 30^{\circ}$.
When we place the points as vertices of a square like in Figure~\ref{subfigure:K4:cross}, then we have $\TAR(D)=45^{\circ}$.
Therefore, we get in this case a larger $\TAR(D)$ if $D$ has crossings.
Another example can be seen in Figure~\ref{figure:8gon}.
However, neither of these examples satisfies $\TAR(G)>60^{\circ}$ and $\TAR(D)\leq 60^{\circ}$ if $D$ is plane.

\begin{figure}[hbt]
	\centering 
	\subfloat[Plane drawing with $\TAR(D)=62.5^{\circ}$.\label{subfigure:8gon:plane}]{
		\includegraphics[page=2]{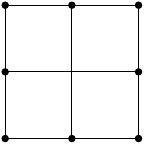}
	}
	\hspace{50pt}
	\subfloat[$8$-cycle with crossing diagonals and $\TAR(D)=90^{\circ}$.\label{subfigure:8gon:cross}]{
		\includegraphics[page=1]{figures/8gon_diag}
	}
	\caption{$8$-cycle with diagonals.}\label{figure:8gon}
\end{figure}

\begin{figure}[hbt]
	\centering 
	\subfloat[Plane drawing.\label{subfigure:cross:good_plane}]{
		\includegraphics[page=3]{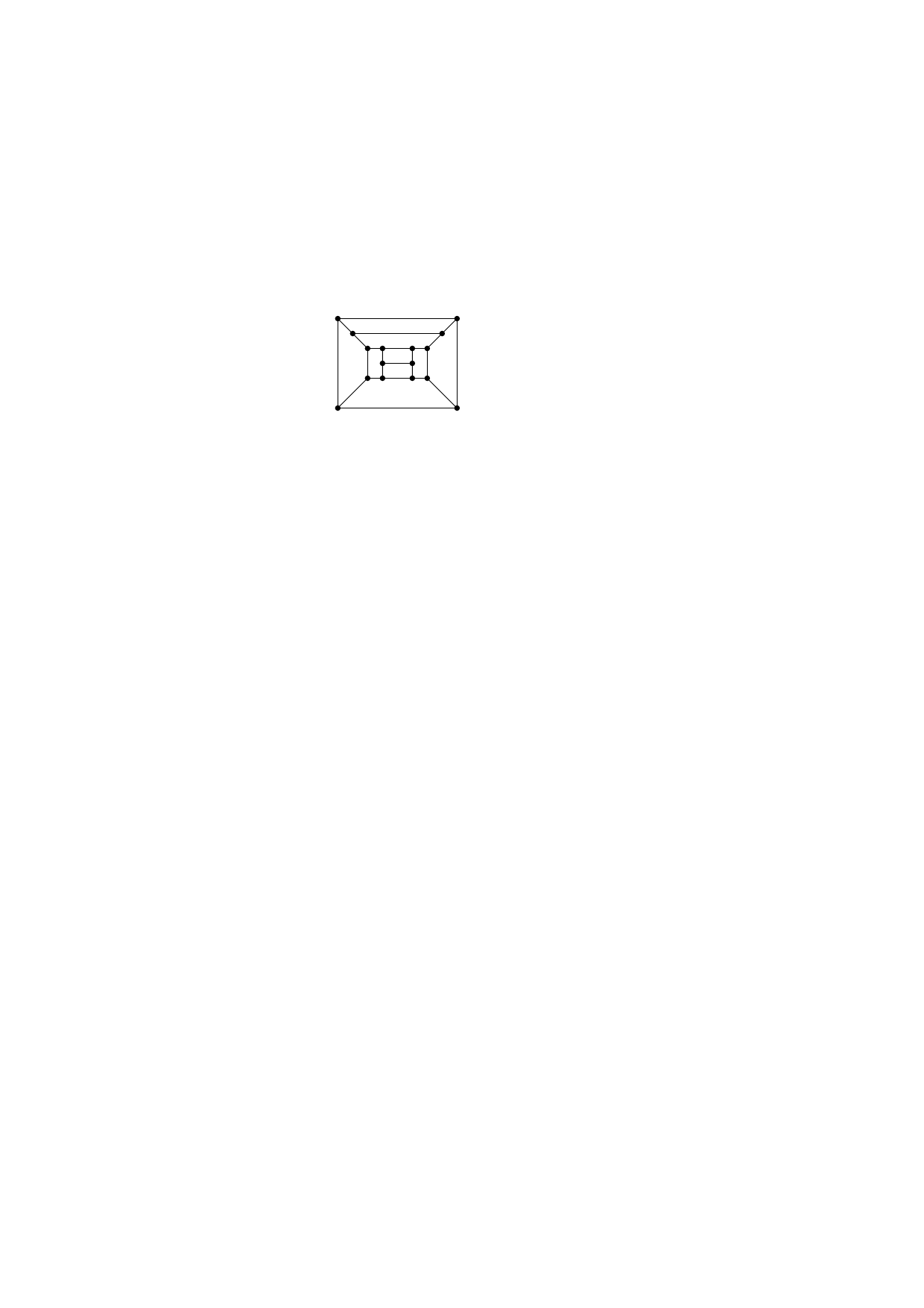}
	}
	\hspace{50pt}
	\subfloat[Drawing with $\TAR(D)=67.5^{\circ}$.\label{subfigure:cross:good_cross}]{
		\includegraphics[page=2]{figures/with_cross_greater_60}
	}
	\caption{Drawings of two nested $8$-cycles with two diagonals} 
\end{figure}

The graph in Figure~\ref{subfigure:cross:good_cross} has $\TAR(G)\geq 67.5^{\circ}$.
Every face of a plane embedding $D$ of $G$, which can be seen in Figure~\ref{subfigure:cross:good_plane}, has at most $6$ edges.
Since the graph is $3$-regular and the outer face has at most $6$ edges, Lemma~\ref{obs:inner_degree} gives us $\TAR(D)\leq 60^{\circ}$ for every plane drawing of the graph of Figure~\ref{subfigure:cross:good_plane}.


\section{Conclusion and open problems}

\noindent In this work we have shown that, up to a finite number of well specified exceptions of constant size, any graph $G$ with $\TAR(G)>60^{\circ}$ has at most $2n-6$ edges. 
For larger angles we were able to obtain similar bounds:
For graphs with $\TAR(G)\geq 90^{\circ}$ we have $m \leq 2n - 2\sqrt{n}$, for $\TAR(G)> 90^{\circ}$ we have $m \leq \frac{3}{2}n - \frac{5}{2}$, and for $\TAR(G)>120^{\circ}$ we have $m \leq n$ for $n\geq 7$. These bounds are tight.
We conjecture that almost all graphs with $\TAR(G)>\frac{k-2}{k}90^{\circ}$ have at most $2n-2-\lfloor \frac{k}{2} \rfloor$ edges.

From a computational point of view, we have proven that deciding whether a given graph admits a drawing with total angular resolution at least $60^{\circ}$ is in general \NP-hard.
The same was known before for at least $90^\circ$~\cite{Formann1993DrawingGI}.
On the other hand, for large angles, the recognition problem eventually becomes easy (for example, $G$ can be drawn with $\TAR(G) > 120^\circ$ if and only if it is the union of cycles of at least 7 vertices and arbitrary paths). This yields the following open problem: At which angle(s) does the decision problem change from NP-hard to polynomial-time solvable?

We introduced $\TARC$ critical graphs 
and showed the existence of $\TARC[60^{\circ}]$ critical graphs with $\frac{3}{2}n$ edges. It remains open whether there are $\TARC[60^{\circ}]$ critical graphs with fewer than $\frac{3}{2}n$ edges. More generally, how many edges does the smallest $\TARC$ critical graph have for a fixed $\alpha$? It is also open, for which values of $\alpha$ there exist $\TARC$ critical graphs with more than $n$ vertices, where $n$ is arbitrarily large.
For the complete graph $K_n$ we proved that we can delete any $\frac{n-1}{3}$ edges of $K_n$ and still get $\TAR(G)=\frac{180^{\circ}}{n}$. It is open whether this bound is tight. On the other hand we presented two families of drawings, which have $\TAR(G)>\frac{180^{\circ}}{n}$ and many edges.
As a related question, what is the smallest number of edges a graph with $n$ vertices and $\TAR(G)=\frac{180^{\circ}}{n}$ can have?

We showed that the Petersen graph has ${\TAR(G_P)=60^{\circ}}$.
For the generalized Petersen graphs we showed that $\TAR(\GPG(n,2))>60^{\circ}$ for $n>6$.
 
\section*{Acknowledgments}

\noindent This work started during the Japan-Austria Joint Seminar \emph{Computational Geometry Seminar with Applications to Sensor Networks} supported by the Japan Society for the Promotion of Science (JSPS) and the Austrian Science Fund (FWF) under grant AJS 399. We would like to thank all participants for creating a stimulating research environment.
Oswin Aichholzer, Irene Parada, Daniel Perz, and Birgit Vogtenhuber were partially supported by the FWF grants W1230 (Doctoral Program Discrete Mathematics) and I 3340-N35 (Collaborative DACH project \emph{Arrangements and Drawings}).
Yoshio Okamoto was partially supported by JSPS KAKENHI Grant Numbers JP20H05795 and JP20K11670.
An extended abstract of parts of this work has been presented at the 27th International Symposium on Graph Drawing and Network Visualization (2019)~\cite{aichholzer2019graphs}.

\bibliographystyle{abbrv}
\bibliography{TARsources}

\end{document}